\theoremstyle{plain}
\newtheorem{theorem}{Theorem}
\newtheorem*{theorem*}{Theorem}
\newtheorem{lemma}[theorem]{Lemma}
\newtheorem*{lemma*}{Lemma}
\newtheorem{corollary}[theorem]{Corollary}
\newtheorem*{corollary*}{Corollary}
\newtheorem{proposition}[theorem]{Proposition}
\newtheorem*{proposition*}{Proposition}
\theoremstyle{definition}
\newtheorem{definition}[theorem]{Definition}
\newtheorem*{definition*}{Definition}
\newtheorem*{example*}{Example}
\theoremstyle{remark}
\newtheorem*{remark*}{Remark}
\newcommand{\cubicaltt}{{\sc cubicaltt}}
\newcommand{\po}{\mathsf{po}}
\newcommand{\elim}{\mathsf{elim}}
\newcommand{\der}{\vdash}
\newcommand{\co}{\colon}
\newcommand{\all}{\forall}
\newcommand{\Imp}{\Rightarrow}
\def\phi{\varphi}
\newcommand{\is}{\vec i}
\newcommand{\rs}{\vec r}
\newcommand{\us}{\vec u}
\newcommand{\vs}{\vec v}
\newcommand{\xs}{\vec x}
\newcommand{\zs}{\vec z}
\newcommand{\thetas}{\vec \theta}
\newcommand{\As}{\vec A}
\newcommand{\Ps}{\vec P}
\DeclareMathOperator{\dM}{\mathsf{dM}}
\newcommand{\II}{\mathbb{I}}
\newcommand{\FF}{\mathbb{F}}
\newcommand{\su}[2]{#1/#2}
\newcommand{\subst}[2]{(\su #1 #2)}
\newcommand{\pp}{\mathsf{p}}       
\newcommand{\qq}{\mathsf{q}}       
\newcommand{\N}{\mathsf{N}}
\newcommand{\unit}{\mathsf{t\!t}}         
\newcommand{\Path}{\mathsf{Path}}
\newcommand{\nabs}[1]{\langle #1 \rangle}
\newcommand{\tabs}[1]{[ #1 ]}
\newcommand{\comp}{\mathsf{comp}}
\newcommand{\hcomp}{\mathsf{hcomp}}
\newcommand{\hComp}{\mathsf{hfill}}
\newcommand{\transport}{\mathsf{transport}}
\def\transp{\mathsf{trans}}
\newcommand{\ctransp}{\mathsf{ctrans}}
\newcommand{\Transp}{\mathsf{transFill}}
\newcommand{\cTransp}{\mathsf{ctransFill}}
\newcommand{\Glue}{\mathsf{Glue}} 
\newcommand{\UU}{\mathsf{U}}
\newcommand{\squeeze}{\mathsf{squeeze}}
\newcommand{\CC}{\mathcal{C}}
\DeclareMathOperator{\Fib}{\mathsf{Fibrant}} 
\DeclareMathOperator{\yoneda}{\mathbf{y}}
\DeclareMathOperator{\app}{app}
\DeclareMathOperator{\scomp}{\mathsf{c}} 
\DeclareMathOperator{\sfill}{\mathsf{f}} 
\DeclareMathOperator{\CompStr}{\mathsf{Comp}} 
\DeclareMathOperator{\FillStr}{\mathsf{Fill}} 
\newcommand{\strans}{t}      
\newcommand{\shcomp}{h}    
\definecolor{dkblue}{rgb}{0,0.1,0.5}
\definecolor{lightblue}{rgb}{0,0.5,0.5}
\definecolor{dkgreen}{rgb}{0,0.4,0}
\definecolor{dk2green}{rgb}{0.4,0,0}
\definecolor{dkviolet}{rgb}{0.6,0,0.8}
\newcommand{\Sp}{{\mathbb S}}
\newcommand{\base}{{\sf base}}
\newcommand{\LOOP}{{\sf loop}}
\newcommand{\Spelim}[2]{\Sp^#1\textsf{-elim}_{#2}\,}
\DeclareMathOperator{\susp}{\mathsf{Susp}}
\newcommand{\north}{{\sf N}}
\newcommand{\south}{{\sf S}}
\newcommand{\merid}{{\sf merid}}
\newcommand{\TT}{{\mathbb T}}
\newcommand{\tb}{{\sf b}}
\newcommand{\tp}{{\sf tp}}
\newcommand{\tq}{{\sf tq}}
\newcommand{\tsurf}{{\sf surf}}
\newcommand{\TTF}{\TT_{\sf F}}
\newcommand{\tfb}{{\sf \tb_{\sf F}}}
\newcommand{\tfp}{{\sf \tp_{\sf F}}}
\newcommand{\tfq}{{\sf \tq_{\sf F}}}
\newcommand{\tfsurf}{\tsurf_{\sf F}}
\newcommand{\inh}[1]{\lVert#1\rVert}
\DeclareMathOperator{\inc}{\mathsf{inc}}
\newcommand{\squash}{\mathsf{sq}}
\newcommand{\pushA}{A}
\newcommand{\pushB}{B}
\newcommand{\pushC}{C}
\newcommand{\pushf}{u}
\newcommand{\pushg}{v}
\newcommand{\Push}{\pushA \sqcup_\pushC \pushB}
\DeclareMathOperator{\inl}{\mathsf{inl}}
\DeclareMathOperator{\inr}{\mathsf{inr}}
\newcommand{\push}{\mathsf{push}} 
\newcommand{\gD}{\mathsf{D}}     
\newcommand{\gc}{\mathsf{c}}     
\newcommand{\gf}{\mathsf{f}}     
\begin{document}
\title{On Higher Inductive Types in Cubical Type Theory}

\author{Thierry Coquand, Simon Huber, and Anders Mörtberg}

\date{}

\maketitle

\begin{abstract}
  Cubical type theory provides a constructive justification to certain
  aspects of homotopy type theory such as Voevodsky's univalence
  axiom. This makes many extensionality principles, like function and
  propositional extensionality, directly provable in the theory. This
  paper describes a constructive semantics, expressed in a presheaf
  topos with suitable structure inspired by cubical sets, of some
  higher inductive types. It also extends cubical type theory by a
  syntax for the higher inductive types of spheres, torus,
  suspensions, truncations, and pushouts. All of these types are
  justified by the semantics and have judgmental computation rules for
  all constructors, including the higher dimensional ones, and the
  universes are closed under these type formers.
\end{abstract}

\tableofcontents

\section{Introduction}
\label{sec:introduction}

Homotopy type theory~\cite{HottBook13} provides a new and promising
approach to equality in type theory where types are thought of as
abstract spaces and equality as paths in these
spaces~\cite{AwodeyWarren09}. Iterated equality proofs then correspond
to \emph{homotopies} between paths. This intuition is motivated by
homotopy theoretic models, in particular by the Kan simplicial set
model~\cite{KapulkinLumsdaine12} due to Voevodsky. This allows one to
find new principles in type theory inspired by homotopy theory. Prime
examples of this are Voevodsky's \emph{univalence
  axiom}~\cite{Voevodsky10}, which generalizes the principle of
propositional extensionality to dependent type theory, and the
stratification of types by the complexity of their equality (i.e., by
their homotopy level or ``h-level''~\cite{Voevodsky15}).

In the homotopical interpretation of type theory inductive types are
represented as discrete spaces with only points in them. Higher
inductive types are a natural generalization where types may also be
generated by paths (potentially higher dimensional). This notion of
types, combined with universes and the univalence axiom, is an
important extension of dependent type theory, which allows for an
elegant and original synthetic development of algebraic topology,
using in a key way type-theoretic ideas (such as the encode-decode
method~\cite{HottBook13}). Impressive examples of this development
are, among others, the definition of the Hopf fibration, the
Freudenthal suspension theorem and the Blakers-Massey
theorem~\cite{Brunerie16,BlakersMassey}. However, and somewhat
surprisingly, despite several efforts (e.g., \cite{LumsdaineShulman}),
the {\em consistency} of such an extension, which would justify these
impressive developments, has not yet been established. The simplicial
set model~\cite{KapulkinLumsdaine12} provides (in a classical
framework) a model for the univalence axiom, but it only provides a
model for some very particular higher inductive types (such as the
spheres, and the propositional truncation via an impredicative
encoding~\cite{Voevodsky15}), and, as explained in
\cite{LumsdaineShulman}, it is not clear how to extend this model to a
model of parametrized higher inductive types like the suspension or
pushouts (expressed as operations on a given universe).

\paragraph{Contributions}

The first contribution of the present paper is to provide such a
semantics, starting in an essential way not from the simplicial set
model, but from a cubical set
model~\cite{CohenCoquandHuberMortberg15,OrtonPitts16}. This semantics
is furthermore carried out in a constructive meta-theory. Our second
contribution is to extend cubical type theory with a syntax for higher
inductive types, exemplified by: spheres, the torus, suspensions,
truncations, and pushouts. These types illustrate many of the
difficulties in giving a computational justification for a general
class of higher inductive types, in particular: the spheres and torus
have higher dimensional constructors, furthermore one version of the
torus refers to ``fibrancy'' structure in its endpoints, the
suspension has a parameter type, the truncations are recursive, and
the pushouts have function applications in the endpoints of the path
constructor. We show how to overcome all of these difficulties in a
uniform way which suggests an approach to the problem of defining a
schema for higher inductive types in cubical type theory.

Furthermore, all of the higher inductive types we consider have the
following good properties justified by our semantics:
\begin{enumerate}
\item judgmental computation rules for all constructors,
\item strict stability under substitution, and
\item closure under universe levels (the higher inductive types live
  in the same universe as their parameters).
\end{enumerate}

We have also implemented a variation of the system presented in this
paper and performed multiple experiments with
it.\footnote{See:~\url{https://github.com/mortberg/cubicaltt/tree/hcomptrans}}

\paragraph{Outline}

The paper begins by describing the semantics, expressed in a presheaf
topos with suitable structure, of the circle
(Section~\ref{sec:semcircle}), suspension (Section~\ref{sec:semsusp}),
and pushouts (Section~\ref{sec:sempush}). The next section starts with
a short background on cubical type theory
(Section~\ref{sec:cubicaltt}) followed by the extension to the theory
with: circle and spheres (Section~\ref{subsec:spheres}), the torus
(Section~\ref{subsec:torus}), suspensions
(Section~\ref{subsec:suspension}), propositional truncation
(Section~\ref{subsec:truncations}), and pushouts
(Section~\ref{subsec:pushouts}). The paper ends with conclusions and
discussions on future and related work
(Section~\ref{sec:conclusions}).

\section{Semantics of higher inductive types}
\label{sec:semantics}

As shown in \cite{OrtonPitts16,LicataEtAl18,AngiuliEtAl17}, the
presentation of the semantics of cubical type theory can be both
simplified and clarified by using the language of extensional type
theory (with universes). This language can be given meaning in any
presheaf topos, so long as we assume that the ambient set theory has a
hierarchy of Grothendieck universes.  In particular, we are going to
show that the justification of higher inductive types can be done
internally, using the existence of suitable initial algebras as the
only extra assumption.  We then justify the existence of these initial
algebras for our presheaf topos externally. The key idea will be a
decomposition of the notion of composition structure
\cite{OrtonPitts16,LicataEtAl18} into a {\em transport} and a {\em
  homogeneous composition} operation.\footnote{As explained in
  \cite{AngiuliEtAl17} this decomposition was first introduced in an
  early version of \cite{CohenCoquandHuberMortberg15}, precisely to
  address the problem of the semantics of propositional truncation and
  this decomposition is also present
  in~\cite{CHTTPOPL,AngiuliHouHarper17,CavalloHarper18}.} This
decomposition can be described internally.

We will work here in the presheaf topos over the Lawvere theory of
De~Morgan algebras \cite{CohenCoquandHuberMortberg15,LicataEtAl18}
(but, following \cite{OrtonPitts16}, our results are valid in a more
general setting).  The presentation we use in
\cite{CohenCoquandHuberMortberg15} of this category is the following:
we fix a countable set of names/symbols and the objects of the
category $I,J,\dots$ are finite sets of symbols. A map $J\rightarrow
I$ is then a set-theoretic map from $I$ to the free De~Morgan algebra
$\dM(J)$ on $J$.  The corresponding presheaf model has then a generic
De~Morgan algebra $\II$, taking $\II(J)$ to be $\dM(J)$.  (To have
such a structure on $\II$ is not strictly
necessary~\cite{OrtonPitts16}, but it simplifies the presentation.)

This type $\II$ is used as an abstract representation of the unit
interval, so that a path in a type $A$ is represented by an element of
the exponential $A^{\II}$.  The extra data needed to define a cubical
set model is a notion of \emph{cofibration}, which specifies the
shape of filling problems that can be solved in a dependent type.  We
represent this by a type of \emph{cofibrant} propositions $\FF$
(denoted by $\mathtt{Cof}$ in~\cite{OrtonPitts16}). In
\cite{CohenCoquandHuberMortberg15}, this is represented by the {\em
  face lattice} (see Section~\ref{sec:cubicaltt}), but other choices
are possible.  (Classically, this type $\FF$ is a subtype of the
subobject classifier of the presheaf topos, but, as stressed in
\cite{LicataEtAl18}, we can avoid mentioning the impredicative type of
propositions altogether, and work in a predicative meta-theory.)  We
write $[\phi]$ for the type associated to the proposition $\phi :
\FF$.  So $[\phi]$ is a sub-singleton, and any element of $[\phi]$ is
equal to a fixed element~$\unit$.

A \emph{partial element} of a type $T$ is given by an element $\phi$
in $\FF$ and a function $[\phi] \to T$.  We say that a total element
$v$ of $T$ extends such a partial element $\phi, u$ if we have $\phi
\Imp u\,\unit = v$, where $\Imp$ denotes implication between
propositions.

In this extensional type theory, we can think of a dependent type $A$
over a given type $\Gamma$ as a family of types $A\rho$ indexed by
elements $\rho$ of $\Gamma$.

We now recall the notions of composition and filling
structures~\cite{CohenCoquandHuberMortberg15,OrtonPitts16}.  Let $A$
be a dependent type over a type $\Gamma$.

\begin{definition}
  \label{def:comp-structure}
  A \emph{composition structure} $\scomp_A$ on $A$ is an operation
  taking as inputs $\gamma$ in $\Gamma^\II$, a proposition $\phi$ in
  $\FF$, a partial element $u$ in $[\phi] \to \Pi (i : \II)\, A \gamma
  (i)$, and an element $u_0$ in $A \gamma (0)$ such that $\phi \Imp
  u\,\unit\,0 = u_0$.  This operation produces an element $u_1 =
  \scomp_A\,\gamma\,\phi\,u\,u_0$ in $A \gamma (1)$ such that $\phi
  \Imp u\,\unit\,1 = u_1$.

  The type of all such operations is written $\CompStr(\Gamma,A)$
  (see~\cite[Definition 4.3]{OrtonPitts16} for an explicit internal
  definition).
\end{definition}

\begin{definition}
  \label{def:fill-structure}
  A \emph{filling structure} $\sfill_A$ on $A$ is an operation taking
  the same input as $\scomp_A$ above, but producing an element $v =
  \sfill_A\,\gamma\,\phi\,u\,u_0$ in $\Pi(i:\II)\,A\gamma(i)$ such
  that $v$ extends $u$, i.e., $\phi \Imp u\,\unit = v$, and $v~0 = u_0$.

  We write $\FillStr(\Gamma,A)$ for the type of filling structures on
  $A$.
\end{definition}

This notion of filling structure is an internal form of the homotopy
extension property, which was recognized very early (see, e.g.,
\cite{Eilenberg1939}) as a key for an abstract development of
algebraic topology.

As explained in~\cite{CohenCoquandHuberMortberg15,OrtonPitts16} we
have that $\CompStr(\Gamma,A)$ is a retract of $\FillStr(\Gamma,A)$.
\medskip

In the particular case where $\Gamma$ is the unit type, then $A$ is a
``global'' type, and $\CompStr (\Gamma,A)$ becomes the type $\Fib(A)$
expressing that $A$ is a {\em fibrant object}. Such a \emph{fibrancy
  structure} on $A$ consists of an operation $h_A$ taking as arguments
$u_0$ in $A$ and a partial element $\varphi,u$ of $A^{\II}$ such that
$\varphi\Rightarrow u~\unit~0 = u_0$, and produces an element
$u_1 = h_A~\varphi~u~u_0$ such that
$\varphi\Rightarrow u~\unit~1 = u_1$.

In general, if $A$ is a family of types over $\Gamma$, to give a
composition structure for each fiber, that is, an element in $\Pi
(\rho:\Gamma)\Fib(A\rho)$, is not enough to get a global composition
structure, that is, an element in $\CompStr(\Gamma,A)$ (see
\cite{OrtonPitts16} for an explicit counterexample).  An element in
$\Pi (\rho:\Gamma)\Fib(A\rho)$ is called a {\em homogeneous
  composition structure}.

\medskip

We now describe the notion of {\em transport} operation, which allows
to define a composition structure from a homogeneous composition
structure.
This decomposition of the composition operation into a transport and
homogeneous composition operation plays a crucial role for
interpreting higher inductive types depending on parameters (like
suspension, pushouts, or propositional truncation).

\begin{definition}
  \label{def:transp-structure}
  A \emph{transport structure} $\strans_A$ on $A$ is an operation
  taking as arguments a path $\gamma$ in $\Gamma^\II$, a proposition
  $\phi$ in $\FF$ such that $\phi \Imp \all (i~:~\II)\,\gamma (0) =
  \gamma (i)$, and an element $u_0$ in $A \gamma (0)$.  This operation
  produces an element $u_1 = \strans_A\,\gamma\,\phi\,u_0$ in $A
  \gamma (1)$ such that $\phi \Imp u_0 =~u_1$.
\end{definition}

The condition $\varphi\Rightarrow \forall (i:\II)\, \gamma(0) =
\gamma(i)$ expresses that the path $\gamma$ is {\em constant on}
$\varphi$.

Clearly we obtain a homogeneous composition structure from any
composition structure.  We also get:
\begin{lemma}
  If a family of types $A$ over $\Gamma$ has a composition structure
  $\scomp_A$, then it has a transport structure $\strans_A$.
\end{lemma}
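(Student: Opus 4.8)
The plan is to construct the transport structure $\strans_A$ directly from the composition structure $\scomp_A$ by feeding it a trivial partial element. Given a path $\gamma$ in $\Gamma^\II$, a proposition $\phi$ in $\FF$ with $\phi \Imp \all(i:\II)\,\gamma(0) = \gamma(i)$, and $u_0$ in $A\gamma(0)$, I would set
\[
  \strans_A\,\gamma\,\phi\,u_0 \;=\; \scomp_A\,\gamma\,\phi\,u\,u_0,
\]
where $u : [\phi] \to \Pi(i:\II)\,A\gamma(i)$ is the partial element defined by $u\,\unit\,i = u_0$. The point is that this typechecks precisely because $\phi$ forces $\gamma$ to be constant: under the hypothesis $[\phi]$ we have $\gamma(i) = \gamma(0)$ for all $i$, so $u_0 : A\gamma(0)$ can be regarded as an element of $A\gamma(i)$, making $u\,\unit$ a genuine element of $\Pi(i:\II)\,A\gamma(i)$. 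Here we use the fact that $[\phi]$ is a sub-singleton, so there is no coherence condition to check on $u$ beyond being well-defined.

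Next I would verify the two side conditions required of $\scomp_A$'s input. The compatibility condition $\phi \Imp u\,\unit\,0 = u_0$ holds definitionally since $u\,\unit\,0 = u_0$ by construction. Then, by the defining property of the composition structure (Definition~\ref{def:comp-structure}), the output $u_1 = \scomp_A\,\gamma\,\phi\,u\,u_0$ lies in $A\gamma(1)$ and satisfies $\phi \Imp u\,\unit\,1 = u_1$; since $u\,\unit\,1 = u_0$, this gives exactly $\phi \Imp u_0 = u_1$, which is the property demanded of a transport structure in Definition~\ref{def:transp-structure}.

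I do not expect a serious obstacle here — the argument is essentially a definitional unfolding. The only point requiring a little care is the implicit coercion of $u_0$ along the equality $\gamma(i) = \gamma(0)$ (valid on $[\phi]$) when exhibiting $u$ as a term of the dependent product type; in the extensional type theory in which we are working this is harmless, as equality of types is definitional equality, but one should note that it is exactly the constancy hypothesis on $\phi$ that licenses it. One could alternatively phrase the construction so as to make this coercion explicit, but internally it adds nothing. Thus the transport structure is obtained, and the lemma follows.
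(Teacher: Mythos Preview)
Your proposal is correct and is exactly the paper's construction: the paper defines $\strans_A\,\gamma\,\phi\,u_0 = \scomp_A\,\gamma\,\phi\,(\lambda (x : [\phi]) (i:\II)\,u_0)\,u_0$, which is your $u$ written inline. Your discussion of the type coercion via the constancy hypothesis and the verification of the boundary conditions simply spells out what the paper leaves implicit in its one-line proof.
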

\begin{proof}
  We can take $\strans_A\,\gamma\,\phi\,u_0 =
  \scomp_A\,\gamma\,\phi\,(\lambda (x : [\phi]) (i:\II)\,u_0)\,u_0$.
\end{proof}

\begin{lemma}\label{lemma:compfromtransp}
  If a family of types $A$ over $\Gamma$ has a homogeneous composition
  structure $\shcomp_A$ and a transport structure $\strans_A$, then it
  has a composition structure $\scomp_A$.
\end{lemma}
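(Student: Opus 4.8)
The plan is to define $\scomp_A$ by transporting the entire filling problem, fiberwise, into the single fiber $A\,(\gamma\,1)$, and then solving it there with the homogeneous composition structure $\shcomp_A$. The one nontrivial ingredient is a "forward transport" operation built from $\strans_A$ together with the De~Morgan structure of $\II$.

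First I would define, for $i : \II$ and $a$ in $A\,(\gamma\,i)$, the element
\[
  \fwd\,i\,a \;=\; \strans_A\,(\lambda j.\,\gamma\,(i \vee j))\,(i = 1)\,a
\]
in $A\,(\gamma\,1)$. This is legitimate: the path $\lambda j.\,\gamma\,(i \vee j)$ in $\Gamma^\II$ runs from $\gamma\,i$ (at $j=0$) to $\gamma\,1$ (at $j=1$), and it is constant on the cofibration $(i=1)$, since there $\gamma\,(1 \vee j) = \gamma\,1$ for every $j$; hence the side condition of Definition~\ref{def:transp-structure} is satisfied. By that same definition we moreover obtain the boundary identity $(i = 1) \Imp a = \fwd\,i\,a$, which is what will make the construction land on the correct composite.

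Given the data $\gamma$, $\phi$, $u$, $u_0$ of a composition problem (as in Definition~\ref{def:comp-structure}), I would then set
\[
  \scomp_A\,\gamma\,\phi\,u\,u_0 \;=\;
  \shcomp_A\,(\gamma\,1)\,\phi\,(\lambda x\,i.\,\fwd\,i\,(u\,x\,i))\,(\fwd\,0\,u_0),
\]
where $\shcomp_A\,(\gamma\,1)$ is the fibrancy structure on the fiber $A\,(\gamma\,1)$ supplied by the homogeneous composition structure. One checks that $\lambda x\,i.\,\fwd\,i\,(u\,x\,i)$ lies in $[\phi] \to (A\,(\gamma\,1))^\II$ and that, on $\phi$, its value at $i = 0$ is $\fwd\,0\,u_0$, using $\phi \Imp u\,x\,0 = u_0$ and functoriality of $\fwd\,0$; so this is a well-formed input for the fibrancy structure. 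The output lies in $A\,(\gamma\,1)$ as required, and by the fibrancy structure it equals $\fwd\,1\,(u\,\unit\,1)$ on $\phi$; instantiating the boundary identity above at $i = 1$ gives $\fwd\,1\,(u\,\unit\,1) = u\,\unit\,1$, so the output extends $u$ at $1$, which is exactly the remaining clause of Definition~\ref{def:comp-structure}.

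The only real subtlety — and the step most likely to require care when formalized — is the use of the connection $i \vee j$ to "squeeze" $\gamma$, together with the observation that the resulting path is constant precisely on $(i = 1)$: this is simultaneously what licenses the application of $\strans_A$ and what forces $\fwd$ to be the identity at $i = 1$, and it is the latter that guarantees the homogeneous composition produces an element extending the original partial element. Everything else is a routine unfolding of the definitions of composition, transport, and fibrancy structures.
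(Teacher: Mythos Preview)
Your proof is correct and is essentially identical to the paper's: the paper defines $\scomp_A\,\gamma\,\phi\,u\,u_0$ as $\shcomp_A\,\gamma(1)\,\phi\,(\lambda x\,i.\,\strans_A\,\gamma'(i)\,(i{=}1)\,(u\,x\,i))\,(\strans_A\,\gamma\,0_\FF\,u_0)$ with $\gamma'(i) = \lambda j.\,\gamma(i\vee j)$, which is exactly your construction once one unfolds your abbreviation $\fwd\,i\,a = \strans_A\,(\lambda j.\,\gamma(i\vee j))\,(i{=}1)\,a$ and notes that $\fwd\,0\,u_0 = \strans_A\,\gamma\,0_\FF\,u_0$. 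Your more detailed verification of the boundary conditions is a welcome addition.
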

\begin{proof}
  We can define $\scomp_A\,\gamma\,\phi\,u\,u_0$ as
  \begin{equation*}
    \shcomp_A\,\gamma (1)\,\phi\, %
    (\lambda (x:[\phi]) (i : \II).\,
    \strans_A\,\gamma'(i)\,(i=1)\,(u\,x\,i))\,%
    (\strans_A\,\gamma\,0_\FF\,u_0)
  \end{equation*}
  where $\gamma' (i) = \lambda (j : \II) \, \gamma (i \lor j)$.
\end{proof}

We are now going to develop some universal algebra internally in the
presheaf model. The operations will involve the interval object $\II$
and the type $\FF$ of cofibrant propositions, and can be seen as a
generalization of the usual notion of operations in universal algebra.

\subsection{Semantics of the circle}
\label{sec:semcircle}

The circle, denoted $\Sp^1$, is represented as a higher inductive type
with a path $\LOOP$ in direction $i : \II$ connecting a point $\base$
to itself:
\begin{mathpar}
  \begin{tikzpicture}
    \draw[->,thick,draw=black,solid,line width=0.4mm]
    (0:0) node[above]{$\base$} arc[radius=1, start angle=90, end
    angle=-265];
    \node[] at (0,0) (b) {$\bullet$};
    \node[] at (0,-2.2) (b) {$\LOOP\,i$};
   \end{tikzpicture}
\end{mathpar}

If $A$ (resp. $B$) has a fibrancy structure $h_A$ (resp. $h_B$), then
a map $\alpha:A\rightarrow B$ is \emph{fibrancy preserving} if it satisfies
\[
  \alpha~(h_A~\varphi~u~u_0) = h_B~\varphi~(\lambda
  (x:[\varphi])(i:\II)~\alpha\,(u~x~i))~(\alpha~u_0).
\]

An \emph{$S^1$-algebra structure} on a type $A$ consists of a fibrancy
structure $h_A$ together with a base point $b_A$ and a loop $l_A$ in
$A^{\II}$ connecting $b_A$ to itself (i.e., $l_A~0 = l_A~1 =
b_A$). Given two $S^1$-algebras $A,h_A,b_A,l_A$ and $B,h_B,b_B,l_B$ a
function $\alpha:A\rightarrow B$ is a map of $S^1$-algebras if it is
fibrancy preserving and
satisfies $\alpha~b_A = b_B$ and $\alpha~(l_A~i) = l_B~i$.

We will show below using {\em external} reasoning:

\begin{proposition}\label{init1}
  There exists an initial $S^1$-algebra, denoted by
  $\Sp^1,\hcomp,\base,\LOOP$.
\end{proposition}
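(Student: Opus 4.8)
The plan is to construct the initial $S^1$-algebra as an initial algebra for a suitable endofunctor on the category of types, using the fact that the ambient set theory has a hierarchy of Grothendieck universes and that presheaf toposes have all the (co)limits needed to build initial algebras of polynomial-like functors. First I would fix a universe level $\mathcal{U}$ large enough to contain the parameters and form the category whose objects are quadruples $(A, h_A, b_A, l_A)$ with $A$ a $\mathcal{U}$-small type, $h_A$ a fibrancy structure on $A$, $b_A : A$, and $l_A : A^{\II}$ with $l_A\,0 = l_A\,1 = b_A$, and whose morphisms are the maps of $S^1$-algebras as defined above (fibrancy-preserving, and commuting with $b$ and $l$). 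The claim is that this category has an initial object. The key observation is that an $S^1$-algebra structure is precisely an algebra for the endofunctor $F$ sending $A$ to the type of ``formal applications'' of the four operations: $F(A) = \mathbf{1} + A^{\II} + (\Sigma\,\varphi : \FF)(\,[\varphi]\to A^{\II}) \times A$ (the summands correspond to $\base$, the endpoints/$\LOOP$, and $\hcomp$, with the endpoint/degeneracy equations built in), together with the equational constraints $l_A\,0 = l_A\,1 = b_A$ and $\varphi \Imp u\,\unit\,0 = u_0$ and the fibrancy-structure equations. So I would package this as an algebra for a finitary (or at least accessible, since $\II$ and $\FF$ are small) quotient-inductive signature.

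The main step is then to invoke the existence of initial algebras for such signatures in the presheaf topos. Since the excerpt explicitly says ``we are going to show that the justification of higher inductive types can be done internally, using the existence of suitable initial algebras as the only extra assumption'' and ``We then justify the existence of these initial algebras for our presheaf topos externally,'' I would at this point cite that external construction: presheaves over the De~Morgan algebra theory form a locally presentable category (indeed a Grothendieck topos), the functor $F$ above is accessible (it is built from small limits, small colimits, and exponentials by the fixed small objects $\II$ and $\FF$), and accessible endofunctors on locally presentable categories have initial algebras by the standard transfinite-iteration argument of Adámek (or by the general theory of quotient-inductive-inductive types, e.g.\ as in the work cited later in the paper). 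The equational constraints — including those ensuring $h$ really is a fibrancy structure and $l$ really has the right endpoints — are handled by taking the initial algebra in the appropriate category of $F$-algebras-with-equations, i.e.\ as a reflexive coequalizer / quotient inductive type, which again exists by accessibility. Unwinding this gives a type $\Sp^1$ with operations $\hcomp, \base, \LOOP$ satisfying exactly the $S^1$-algebra laws, and initiality is exactly the recursion principle: for any other $S^1$-algebra there is a unique algebra map out of $\Sp^1$.

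The hard part — and the reason this proposition is stated here but proved later ``using external reasoning'' — is making precise that the fibrancy structure $h_A$ can be incorporated as part of the algebra signature and that the resulting functor is still accessible, so that the initial-algebra machinery applies. Unlike the point constructor $\base$ and the path constructor $\LOOP$, which contribute only a constant and an $\II$-indexed family with boundary conditions, the operation $h_A$ has a genuinely infinitary flavor: its input involves an arbitrary cofibrant proposition $\varphi : \FF$ and an arbitrary partial path $[\varphi] \to A^{\II}$, so one must check that $\FF$ and $\II$ are small enough objects in the presheaf topos that the corresponding polynomial functor preserves $\kappa$-filtered colimits for some $\kappa$. In our model $\FF$ is (an internal version of) the face lattice and $\II$ is the generic De~Morgan algebra, both of which are representable-generated and hence small, so accessibility does hold; but verifying this — and checking that the fibrancy-preservation equation, which quantifies over all $\varphi, u, u_0$, is likewise an accessible condition — is the real content. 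Once that is granted, initiality is formal. I would therefore defer the detailed external verification to the later section and, at this point in the paper, only assert the proposition with a forward reference, exactly as the authors do.
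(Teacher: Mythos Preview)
Your outline takes a genuinely different route from the paper's. You propose to obtain the initial $S^1$-algebra by abstract categorical means: package the structure as an algebra for an accessible endofunctor on a locally presentable category and invoke Ad\'amek-style transfinite iteration (or a general quotient-inductive-type existence theorem). The paper, by contrast, explicitly says ``We cannot prove this in an abstract way'' and instead builds $\Sp^1$ by hand, externally at each stage $I$ of the base category: it first defines an ``upper approximation'' $\Sp^1_{\text{pre}}(I)$ as a set of well-founded trees (with nodes $\base$, $\LOOP\,r$ for $r\neq 0,1$, and $\hcomp\,[\phi\mapsto u]\,u_0$ for $\phi\neq 1$), defines tentative restriction maps by recursion on those trees, and then carves out the actual presheaf $\Sp^1(I)\subseteq\Sp^1_{\text{pre}}(I)$ by imposing the coherence conditions that make the restrictions functorial and the $\hcomp$ sides match the base. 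Initiality is then shown by a straightforward induction on tree height. The paper stresses that this argument is a \emph{constructive} version of the small object argument and that it ``crucially uses the fact that both $\FF(I)$ and $\II(I)$ have decidable equality''; it contrasts this with the classical route (Garner's small object argument, as in Lumsdaine--Shulman), which is essentially what you are sketching.

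So the gap in your proposal is not that the accessibility approach is wrong per se, but that it does not meet the paper's stated goal of a constructive semantics: transfinite iteration and the general accessible-functor machinery are not obviously available in the constructive metatheory the authors work in, and you do not address this. A smaller point: your functor $F(A)=\mathbf{1}+A^{\II}+\cdots$ has the $\LOOP$ summand wrong; $\LOOP$ is a map $\II\to A$ (with endpoint constraints), so the corresponding summand of $F$ should be $\II$, not $A^{\II}$. This is minor, but it signals that the translation of the fibrancy/endpoint conditions into a single polynomial-plus-equations signature still needs care---precisely the issue the paper sidesteps by working concretely with trees and decidable side conditions.
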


So $\Sp^1$ has a structure of an $S^1$-algebra and the fact that it is
initial means that, for any $S^1$-algebra $A,h_A,b_A,l_A$ there exists
a unique $S^1$-algebra map $\Sp^1\rightarrow A$.

By definition, the type $\Sp^1$ is fibrant since it has a fibrancy
structure $\hcomp$.  Furthermore, we can prove that initiality implies
the dependent elimination rule.\footnote{This is a direct
  generalization of the usual argument that a natural number object
  satisfies the dependent elimination rule.}

\begin{proposition}
  $\Sp^1$ satisfies the dependent elimination rule for the circle:
  given a family of types $P$ over $\Sp^1$ with a composition
  structure, and $a$ in $P~\base$ and $l~i$ in $P~(\LOOP~i)$ such that
  $l~0 = l~1 = a$ there exists a map $\elim:\Pi (x:\Sp^1)P~x$ such
  that $\elim~\base = a$ and $\elim~(\LOOP~i) = l~i$.
\end{proposition}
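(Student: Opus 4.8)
The plan is to follow the classical argument that a natural number object (more generally, an initial algebra) supports dependent elimination, adapted to the circle. Given the data $P$, $a$, $l$, I would first assemble an $S^1$-algebra structure on the total space $\Sigma(x:\Sp^1)\,P\,x$. Its fibrancy structure $h$ is obtained from the given composition structure $\scomp_P$ on $P$ together with $\hcomp$ on $\Sp^1$: to compose in the total space one composes the first components using $\hcomp$ and then composes the $P$-components over the resulting path using $\scomp_P$ (this is exactly the standard "$\Sigma$-closure of composition structures"). The base point is $(\base, a)$ and the loop is $\langle i\rangle\,(\LOOP\,i,\, l\,i)$, which is well-typed since $l\,i : P\,(\LOOP\,i)$ and $l\,0 = l\,1 = a$, and which connects $(\base,a)$ to itself.

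Next I would observe that the first projection $\po : \Sigma(x:\Sp^1)\,P\,x \to \Sp^1$ is a map of $S^1$-algebras: it sends $(\base,a)$ to $\base$, sends the loop to $\LOOP$, and is fibrancy preserving precisely because of how $h$ was defined on the total space (the first component of the total-space composition is literally $\hcomp$ applied to the first components). Now apply initiality of $\Sp^1$: there is a unique $S^1$-algebra map $f : \Sp^1 \to \Sigma(x:\Sp^1)\,P\,x$. Then $\po \circ f : \Sp^1 \to \Sp^1$ is an $S^1$-algebra map, and so is $\id_{\Sp^1}$; by the uniqueness part of initiality, $\po \circ f = \id_{\Sp^1}$. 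Hence $f\,x = (x, g\,x)$ for some $g\,x : P\,x$, and I set $\elim := g$, i.e. $\elim\,x$ is the second component of $f\,x$. The computation rules $\elim\,\base = a$ and $\elim\,(\LOOP\,i) = l\,i$ then follow from the fact that $f$ preserves the base point and the loop, reading off the second components (using $\po \circ f = \id$ to see that the first components match, so the dependent typing is strict).

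The main obstacle I expect is bookkeeping rather than conceptual: verifying that the total-space algebra structure really is an $S^1$-algebra, in particular that the proposed $h$ on $\Sigma(x:\Sp^1)\,P\,x$ satisfies the required extension condition ($\varphi \Imp u\,\unit\,1 = u_1$) and that $\po$ is genuinely fibrancy preserving on the nose. This uses in an essential way that $\scomp_P$ is a genuine (global, $\Gamma$-indexed) composition structure on $P$ over $\Sp^1$ — not merely a homogeneous one — so that one can compose the $P$-components along the varying path produced by $\hcomp$ in the base; this is where the hypothesis "$P$ has a composition structure" is consumed. A secondary subtlety is making sure the equality $\po \circ f = \id$ is used correctly so that $f\,x$ is definitionally of the form $(x, \elim\,x)$, which is what makes the two computation rules hold judgmentally and not just up to a path; this is the same step as in the natural number object argument and goes through for the same reason.
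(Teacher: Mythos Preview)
Your proposal is correct and follows essentially the same route as the paper: form the $S^1$-algebra on $\Sigma(x:\Sp^1)\,P\,x$ using the $\Sigma$-closure of composition structures, observe that $\pi_1$ is an $S^1$-algebra map, use initiality and uniqueness to get $\pi_1\circ f = \id$, and take $\elim := \pi_2\circ f$. Your extra discussion of why the standard $\Sigma$-composition makes $\pi_1$ fibrancy-preserving on the nose (first component of the total-space composition is literally $\hcomp$ on $\Sp^1$) is exactly the point that needs checking and is correct.
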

\begin{proof}
  We know by \cite{CohenCoquandHuberMortberg15,OrtonPitts16} that $A =
  \Sigma (x:\Sp^1)P~x$ has a composition structure. It has then a
  natural $\Sp^1$-algebra structure, taking $b_A = \base,a$ and $l_A~i
  = \LOOP~i,l~i$. This structure is such that the first projection
  $\pi_1:A\rightarrow \Sp^1$ is a map of $S^1$-algebras. We have a
  unique $S^1$-algebra map $\alpha:\Sp^1\rightarrow A$ and $\pi_1\circ
  \alpha$ is the identity on $\Sp^1$. We can then define $\elim~x =
  \pi_2~(\alpha~x)$ in $P~x$.
\end{proof}

\subsection{Semantics of the suspension operation}
\label{sec:semsusp}

The suspension $\susp{A}$ of a type $A$ has constructors $\north$ and
$\south$ (two poles) and a path between them for any element of
A. This enables us to give a direct definition of $\Sp^{n+1}$ as
$\susp^n\Sp^1$.  Compared to the circle, this higher inductive type
presents the extra complexity of having parameters and the
decomposition of the composition operation will be the key for
providing its semantics.

Given a type $X$, a \emph{$\susp{X}$-algebra structure} on a type $A$
consists of a fibrancy structure $h_A$ together with two points
$n_A,s_A$, and a family of paths $l_A$ in $X\rightarrow A^{\II}$
connecting $n_A$ to $s_A$ (i.e., $l_A~x~0 = n_A$ and $l_A~x~1 = s_A$
for all $x$ in $X$). Given two $\susp{X}$-algebras $A,h_A,n_A,s_A,l_A$
and $B,h_B,n_B,s_B,l_B$ a function $\alpha \co A \to B$ is a map of
$\susp{X}$-algebras if it is fibrancy preserving
and satisfies $\alpha~n_A =~n_B$, $\alpha~s_A = s_B$,
and $\alpha~(l_A~i) = l_B~i$.

As for the circle we can show using external reasoning:

\begin{proposition}\label{init2}
  There exists an initial $\susp{X}$-algebra, denoted by
  $\susp{X},\hcomp,\north,\south,\merid$.
\end{proposition}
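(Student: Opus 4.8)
The plan is to construct the initial $\susp{X}$-algebra by an external general-position argument, analogous to the one promised for Proposition~\ref{init1}, but now carried out in a parametrized way. First I would observe that a $\susp{X}$-algebra structure on a type $A$ is given by finitely much algebraic data: a fibrancy operation $h_A$ (which is itself an operation whose inputs and outputs involve $\II$ and $\FF$), two constants $n_A,s_A$, and an operation $l_A : X \to A^{\II}$ subject to the two boundary equations $l_A\,x\,0 = n_A$, $l_A\,x\,1 = s_A$. All of this fits the pattern of the ``generalized universal algebra'' (operations that may take arguments in $\II$ and in $[\phi]$ for cofibrant $\phi$) that the paragraph just before Section~\ref{sec:semcircle} announces we are developing. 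So the strategy is: (i) package $\susp{X}$-algebras as algebras for an endofunctor (or a finitary-in-the-generalized-sense signature) on the presheaf topos; (ii) invoke the existence of initial algebras for such signatures, proved externally for the De~Morgan cubical-set presheaf topos using the assumed Grothendieck universes; (iii) read off the components $\hcomp,\north,\south,\merid$ from the initial algebra and the universal property.

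Concretely, I would work externally over each object $J$ of the base category. One builds the carrier $\susp{X}$ as a presheaf by the standard transfinite/colimit construction of a free algebra: start from the ``generators'' — the two points and, for each element $x$ of $X(J)$ together with each point $i$ of $\II(J)$, a cell $\merid\,x\,i$ — and then freely close under the fibrancy operation $\hcomp$, quotienting by exactly the equations in the signature (the boundary equations for $\merid$, the equations defining when $\hcomp$ is applied to a partial element that is already total, and the naturality/restriction equations making everything a presheaf). Because the signature is finitary in the appropriate sense — each operation has finitely many principal arguments, the auxiliary arguments range over the small objects $\II$ and $\FF$, and the cofibrant propositions are generated by finitely many face constraints — this colimit exists at a bounded stage, lands in the chosen Grothendieck universe, and yields a genuine presheaf. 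The fibrancy structure on $\susp{X}$ is then the tautological $\hcomp$ operation built into the construction, so $\susp{X}$ is fibrant; and for any $\susp{X}$-algebra $A,h_A,n_A,s_A,l_A$, the unique algebra map $\susp{X} \to A$ is defined by the evident structural recursion over how elements of $\susp{X}$ were generated, with uniqueness following because any algebra map is forced on generators and commutes with $\hcomp$.

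The main obstacle I expect is verifying, externally, that the colimit defining the carrier actually stabilizes and that the resulting object is a well-defined presheaf which is moreover \emph{strictly} stable under substitution in the parameter $X$ and under reindexing along maps in the base category — i.e.\ that the construction is natural, not merely natural up to isomorphism. This is the usual subtlety with initial-algebra presentations of higher inductive types: one must set up the free-algebra construction so that restriction maps of the presheaf $\susp{X}$ are themselves induced by the universal property (naturality of ``initial $\susp{X(-)}$-algebra'' in the object of the base), and check that the boundary equations for $\merid$ interact correctly with the De~Morgan structure on $\II$ (so that, e.g., substituting $i := i \lor j$ in a meridian cell behaves coherently). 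A secondary point is handling the fibrancy operation $h_A$ inside the signature: since $h_A$ is not an ordinary algebraic operation but one indexed by a cofibrant $\phi$ and a partial element, one must check the signature is still of a shape for which the external initial-algebra theorem applies — this is precisely where the decomposition into transport plus homogeneous composition (Lemma~\ref{lemma:compfromtransp}) will later be used to equip $\susp{X}$ with a full composition structure, but for the bare statement of Proposition~\ref{init2} it suffices that $\hcomp$ appear as a homogeneous-composition-style operation in the signature. Once naturality and the presheaf conditions are in hand, everything else — fibrancy of $\susp{X}$, existence and uniqueness of algebra maps — is routine structural recursion.
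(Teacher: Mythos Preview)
Your high-level strategy --- an external construction of a free/initial algebra by generating and then closing under operations --- is in the same spirit as the paper's, but the concrete mechanism differs in an important way. You propose a colimit/quotient construction: freely generate under the operations and then quotient by the boundary and fibrancy equations. The paper instead builds $\susp{X}$ as a presheaf of \emph{well-founded trees in normal form}, with no quotient at all. At each stage $(I,\rho)$ one inductively defines an ``upper approximation'' set whose elements are $\north$, $\south$, $\merid\,a\,r$ subject to the side condition $r \neq 0,1$, and $\hcomp\,[\phi\mapsto u]\,u_0$ subject to $\phi \neq 1$; tentative restriction maps are defined by recursion on these trees, with the side conditions resolved by case analysis (e.g.\ $(\merid\,a\,r)f$ becomes $\north$ when $rf = 0$, $\south$ when $rf=1$). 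One then carves out the actual presheaf as the \emph{subset} of trees whose restrictions are coherent. This works constructively precisely because $\II(I)$ and $\FF(I)$ have decidable equality, so the side conditions are decidable; the paper explicitly remarks that classically one could instead use Garner's small object argument, which is essentially what you sketch. The payoff of the paper's route is that no quotients (hence no appeal to quotient types or setoids in the constructive metatheory) are needed, and the elimination map is defined by a plain induction on tree height. Your approach would go through classically, but the quotient step and the ``colimit stabilizes at a bounded stage'' step are exactly the places where the constructive argument the paper is after would need additional work.
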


By definition, the type $\susp{X}$ is fibrant since it has a fibrancy
structure $\hcomp$.  Using this filling structure, we prove as above:

\begin{proposition}
  $\susp{X}$ satisfies the dependent elimination rule for the
  suspension: given a family of type $P$ over $\susp{X}$ with a
  composition structure, and $n$ in $P~\north$ and $s$ in $P~\south$
  and $l~x~i$ in $P~(\merid~x~i)$ such that $l~x~0 = n$ and $l~x~1 =
  s$ there exists a map $\elim:\Pi (x:\susp{X})P~x$ such that
  $\elim~\north = n$ and $\elim~\south = s$ and $\elim~(\merid~x~i) =
  l~x~i$.
\end{proposition}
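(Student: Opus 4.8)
The plan is to mimic the proof of dependent elimination for the circle, which in turn generalizes the standard argument that a natural number object satisfies dependent elimination. The key observation is that dependent elimination can be derived from initiality by considering the \emph{total space} as an algebra and using uniqueness.

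\medskip

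First I would observe that, given a family of types $P$ over $\susp{X}$ equipped with a composition structure, the total space $A = \Sigma(x : \susp{X})\,P\,x$ carries a composition structure; this is the standard fact that $\Sigma$-types are closed under composition, established in~\cite{CohenCoquandHuberMortberg15,OrtonPitts16}. In particular $A$ has a fibrancy structure $h_A$ (the restriction of its composition structure to the case $\Gamma = 1$). Next I would equip $A$ with a $\susp{X}$-algebra structure: take $n_A = (\north, n)$ and $s_A = (\south, s)$ in $A$, and for $x$ in $X$ take $l_A\,x\,i = (\merid\,x\,i,\, l\,x\,i)$ in $A^{\II}$. One needs to check that $l_A\,x\,0 = n_A$ and $l_A\,x\,1 = s_A$, which follows componentwise from $\merid\,x\,0 = \north$, $\merid\,x\,1 = \south$ (from the algebra structure on $\susp X$) and the hypotheses $l\,x\,0 = n$, $l\,x\,1 = s$. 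By construction, the first projection $\pi_1 : A \to \susp{X}$ is a map of $\susp{X}$-algebras: it is fibrancy preserving (this is where we use that the fibrancy structure on $A$ projects correctly, again part of the $\Sigma$-closure result) and it sends $n_A \mapsto \north$, $s_A \mapsto \south$, $l_A\,x\,i \mapsto \merid\,x\,i$.

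\medskip

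Now I would invoke Proposition~\ref{init2}: since $\susp{X}$ is the \emph{initial} $\susp{X}$-algebra, there is a unique $\susp{X}$-algebra map $\alpha : \susp{X} \to A$. Composing, $\pi_1 \circ \alpha : \susp{X} \to \susp{X}$ is a $\susp{X}$-algebra map, and since the identity on $\susp{X}$ is also such a map, uniqueness of maps out of the initial algebra gives $\pi_1 \circ \alpha = \id_{\susp{X}}$. Hence for every $x : \susp{X}$ we have $\pi_1(\alpha\,x) = x$, so $\pi_2(\alpha\,x)$ lies in $P\,x$, and we may \emph{define} $\elim\,x = \pi_2(\alpha\,x)$. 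The computation rules then follow from $\alpha$ being an algebra map: $\alpha\,\north = n_A = (\north, n)$ so $\elim\,\north = n$; similarly $\elim\,\south = s$; and $\alpha\,(\merid\,x\,i) = l_A\,x\,i = (\merid\,x\,i, l\,x\,i)$ so $\elim\,(\merid\,x\,i) = l\,x\,i$, as required.

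\medskip

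The main obstacle — really the only non-bookkeeping step — is verifying that the $\susp{X}$-algebra structure placed on the total space $A$ is \emph{correctly compatible} with the one on $\susp{X}$, i.e.\ that $\pi_1$ is genuinely fibrancy preserving with respect to $h_A$ and $\hcomp$. This is exactly the content of the statement that the composition structure on a $\Sigma$-type restricts along the first projection to that of the base, which we are entitled to cite from~\cite{CohenCoquandHuberMortberg15,OrtonPitts16}; once this is in hand the rest is a formal consequence of initiality. Everything else (the endpoint equations for $l_A$, the unfolding of the computation rules) is routine.
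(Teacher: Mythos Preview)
Your proposal is correct and follows essentially the same approach as the paper: the paper simply says ``we prove as above,'' referring to the argument for the circle, and your write-up is precisely that argument instantiated for $\susp{X}$, including the total-space algebra structure, the observation that $\pi_1$ is an algebra map, and the use of initiality to conclude $\pi_1\circ\alpha=\id$.
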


The operation $\susp{X}$ is functorial in $X$. Given a map
$u:X\rightarrow Y$ we get a $\susp{X}$-structure on $\susp{Y}$ by
taking $l_{\susp{Y}}~x~i = \merid_Y~(u~x)~i$ and hence a map $\susp(u)
: \susp{X}\rightarrow \susp{Y}$.

Let now $A$ be a dependent family of types over a given type $\Gamma$,
so that $A\rho$ is a type for any $\rho$ in $\Gamma$. We define a new
family of types $\susp{A}$ over $\Gamma$ by taking $(\susp{A})\rho =
\susp (A\rho)$. By construction, this new family {\em always} has a
homogeneous composition structure (without any hypothesis on $A$).

\begin{proposition}
  If $A$ has a transport structure $\strans_A$, then $\susp{A}$ has a
  transport structure, and hence (since it has a homogeneous
  composition structure) also a composition structure by
  Lemma~\ref{lemma:compfromtransp}.
\end{proposition}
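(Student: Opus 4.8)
The goal is to equip $\susp{A}$ with a transport structure, given a transport structure $\strans_A$ on $A$. Unwinding Definition~\ref{def:transp-structure}, I must produce, from a path $\gamma$ in $\Gamma^\II$, a proposition $\phi$ with $\phi \Imp \all (i:\II)\,\gamma(0) = \gamma(i)$, and a point $u_0$ in $\susp(A\gamma(0))$, an element $u_1$ in $\susp(A\gamma(1))$ agreeing with $u_0$ when $\phi$ holds. The natural approach is to define this by the dependent elimination rule for $\susp(A\gamma(0))$ — equivalently, since the motive is a constant family $\susp(A\gamma(1))$, by the non-dependent recursion principle that comes from initiality of $\susp(A\gamma(0))$ as a $\susp(A\gamma(0))$-algebra. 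So I would specify where the transport sends each constructor: $\north \mapsto \north$, $\south \mapsto \south$, and on the path constructor $\merid~x~i$, I would set the image to be $\merid~(\strans_A\,\gamma\,0_\FF\,x)~i$ where $0_\FF$ is the false proposition (so $\strans_A$ here is just ordinary transport along $\gamma$ in the family $A$, with no constancy constraint). This is well-defined on endpoints because $\strans_A$ preserves nothing extra there and $\merid~y~0 = \north$, $\merid~y~1 = \south$ on both sides; and it is fibrancy preserving essentially by construction, since $\hcomp$ on a suspension family is the homogeneous composition coming from initiality and the recursor respects it.

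**Verifying the constancy condition.** The subtle point — and the main obstacle — is checking the defining equation $\phi \Imp u_0 = u_1$, i.e., that when $\phi$ holds the transport is the identity on $\susp(A\gamma(0)) = \susp(A\gamma(1))$ (these types are literally equal on $[\phi]$ because $\gamma$ is constant on $\phi$). Under the hypothesis $\phi$, the point $\strans_A\,\gamma\,0_\FF\,x$ is \emph{not} automatically equal to $x$: it is only equal when transporting with the constraint $\phi$ itself, i.e. $\strans_A\,\gamma\,\phi\,x = x$ holds on $[\phi]$, but we have invoked $\strans_A$ with $0_\FF$ instead. I would handle this exactly as in the proof of Lemma~\ref{lemma:compfromtransp}: rather than a plain transport on the meridians, use $\strans_A\,\gamma'(i)\,(i=1 \lor \phi)\,x$ where $\gamma'(i) = \lambda(j:\II)\,\gamma(i \lor j)$, so that the path being transported along is constant whenever $i=1$ or $\phi$ holds. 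One then defines $u_1$ by recursion sending $\merid~x~i \mapsto \merid~(\strans_A\,\gamma'(i)\,(i=1 \lor \phi)\,x)~i$, and sending $\north, \south$ to themselves. When $\phi$ holds this recursor is the identity map on constructors (each meridian goes to itself, since the transport is trivial), hence the identity, giving $u_0 = u_1$; and $u_1$ still lands in $\susp(A\gamma(1))$ because at $i=1$ the constraint forces $\strans_A\,\gamma'(1)\,\mathord-\,x$, a transport along a constant path into $A\gamma(1)$.

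**Remaining checks.** I then need to confirm that this recursively defined map is a well-formed $\susp(A\gamma(1))$-algebra map in the sense required, i.e. that it is fibrancy preserving — this is immediate, since any map obtained from initiality of $\susp(A\gamma(0))$ into another $\susp$-algebra automatically commutes with the $\hcomp$ operations, and $\susp{A}$ carries its homogeneous composition structure canonically. There is also a routine verification that the construction is natural/stable: that the meridian endpoint equations $\merid~y~0=\north$, $\merid~y~1=\south$ in the target are respected by our chosen images, which holds on the nose. The only genuine calculation is the endpoint behavior of the constrained transport and the identity-on-$[\phi]$ argument above; everything else is bookkeeping via the elimination rule. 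I expect no essential difficulty beyond correctly choosing the constraint $i=1 \lor \phi$ so that both "lands in the right fiber" and "is the identity on $\phi$" hold simultaneously, exactly mirroring Lemma~\ref{lemma:compfromtransp}. Finally, by that lemma, combining this transport structure with the always-available homogeneous composition structure on $\susp{A}$ yields the desired composition structure.
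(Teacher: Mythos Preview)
Your overall strategy---define the transport on $\susp A$ by recursion, equivalently by initiality of $\susp(A\gamma(0))$---is exactly the paper's, which phrases it as applying the functor $\susp$ to the map $t_A\,\gamma\,\phi\colon A\gamma(0)\to A\gamma(1)$. You also correctly spot that the naive clause $\merid\,x\,i \mapsto \merid\,(t_A\,\gamma\,0_\FF\,x)\,i$ fails the constancy condition on~$\phi$. However, your proposed repair does not typecheck: in $\merid\,x\,i \mapsto \merid\,(t_A\,\gamma'(i)\,((i{=}1)\lor\phi)\,x)\,i$ the variable $i$ is the \emph{meridian} dimension, so $x\in A\gamma(0)$ is fixed while $i$ ranges over $\II$; but $\gamma'(i)(0)=\gamma(i)$, so $t_A\,\gamma'(i)\,\psi$ has domain $A\gamma(i)$, and you cannot apply it to $x$ for general $i$. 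The $\gamma'(i)$ device from Lemma~\ref{lemma:compfromtransp} works there because its input $u\,\unit\,i$ already lives in $A\gamma(i)$; here the input does not, and the two situations are not analogous.

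The actual fix is simpler than either of your attempts: just pass $\phi$ through to the inner transport, sending $\merid\,x\,i \mapsto \merid\,(t_A\,\gamma\,\phi\,x)\,i$. This is well-typed since $t_A\,\gamma\,\phi\,x\in A\gamma(1)$, and on $\phi$ the defining property of a transport structure gives $t_A\,\gamma\,\phi\,x = x$, so the induced algebra map is the identity on every constructor and hence the identity on $\susp(A\gamma(0))=\susp(A\gamma(1))$. This is precisely the paper's one-line proof: the map is $\susp(t_A\,\gamma\,\phi)$, and functoriality of $\susp$ turns ``identity on $\phi$'' for $t_A\,\gamma\,\phi$ into ``identity on $\phi$'' for its suspension.
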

\begin{proof}
  Given $\gamma$ in $\Gamma^{\II}$ and $\varphi$ such that $\gamma$ is
  constant on $\varphi$ (i.e., $\varphi\Rightarrow \forall
  (i:\II)~\gamma(0) = \gamma(i)$), we have a map
  $t_A~\gamma~\varphi:A\gamma(0)\rightarrow A\gamma(1)$ which is the
  identity on $\varphi$ and hence the map $\susp{(t_A~\gamma~\varphi)}$ is a transport map
  $\susp{(A\gamma(0))}\to \susp{(A\gamma(1))}$ which is the identity on
  $\varphi$.
\end{proof}

This example motivates the decomposition of the composition operation
into a transport and homogeneous composition operations. In a context,
we could only build an initial algebra for the {\em homogeneous}
composition operation (by doing it pointwise) and it does not seem
possible to do it for the composition operation directly.  The problem
does not appear for a type like the circle which has no parameters,
for which homogeneous and general compositions coincide.  For the
suspension however, we have to argue further that we also get a transport
operation.  (This problem seems connected to the problem of size
blow-up for parametrized higher inductive types due to fibrant
replacement in the simplicial set model discussed
in~\cite{LumsdaineShulman}.)

\medskip

The same argument applies to the propositional truncation
$\inh{X}$ of a type $X$. We would then instead consider the following
notion of algebra: a type $A$ with a fibrancy structure, a map
$i_A:X\rightarrow A$ and a map $sq_A:A\rightarrow A\rightarrow
A^{\II}$ such that $sq~a_0~a_1$ is a path connecting $a_0$ to $a_1$.

\subsection{Pushouts}
\label{sec:sempush}

Many examples of higher inductive types can be encoded as (homotopy) pushouts of
spans of other types. In particular (homotopy) coequalizers, which
together with coproducts (which are encoded using $\Sigma$-types), can
be used to compute general colimits of diagrams of types. This has
been used to encode many known higher inductive types, including
recursive ones like propositional~\cite{Doorn16,Kraus16} and higher
truncations~\cite{Rijke17}.

The semantics of pushouts involves the same problem with parameters as
in the previous example, but the definition of the transport function
is more complex and we will need to introduce some auxiliary
operations definable from transport.

A {\em span} $D = (C,A,B,u,v)$ consists of two maps $u \co C \to A$
and $v \co C \to B$. Given such a span, we define a $D$-algebra to be
a type $X$ with a fibrancy structure $h_X$ and maps $i_X\co
A\rightarrow X$ and $j_X\co B\rightarrow X$ and $p_X \co C\rightarrow
X^{\II}$ such that $p_X~z~0 = i_X~(u~z)$ and $p_X~z~1 = j_X~(v~z)$. As
above, there is a canonical notion of $D$-algebra maps, and (in
suitable presheaf models) we have an initial $D$-algebra, which we
write $\po(D) = A\sqcup_C B,\hcomp,\inl,\inr,\push$.

We can relativize this situation over a type $\Gamma$.  If $A,B,C$ are
families of types over $\Gamma$ and $u$ (resp.\ $v$) is a family of
maps $u\rho \co C\rho\rightarrow A\rho$ (resp.\ $v\rho \co
C\rho\rightarrow B\rho$), we consider $D = (C,A,B,u,v)$ to be a span
over $\Gamma$, with $D\rho = (C\rho,A\rho,B\rho,u\rho,v\rho)$.  If the
span $D$ is given over $\Gamma$, we define $\po(D)$ in a pointwise way
as for the suspensions, taking $\po(D)\rho$ to be $\po(D\rho)$.

 We want to prove that if $C,A,B$ have transport structures,
then so does $\po(D)$. In order to do that, we first show how to define
further operations from a given transport structure.

\begin{lemma}\label{squeeze}
  Given a family of types $A$ over $\Gamma$ with a transport structure
  $t_A$ we can define a new operation $f_A$ such that
  $f_A~\varphi~\gamma~a_0$ is a path in $\Pi (i:\II)A\gamma(i)$
  constant on $\varphi$ and connecting $a_0$ to
  $t_A~\gamma~\varphi~a_0$ for any $\gamma$ in $\Gamma^{\II}$ constant
  on $\varphi$ and $a_0$ in $A\gamma(0)$. Furthermore given any $a$ in
  $\Pi (i:\II)A\gamma(i)$ we can define an operation
  $sq_A~\varphi~\gamma~a$ which is a path in $(A\gamma(1))^{\II}$
  connecting $t_A~\gamma~\varphi~a(0)$ to $a(1)$, and which is
  constant on $\varphi$.
\end{lemma}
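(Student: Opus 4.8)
The plan is to build both $f_A$ and $sq_A$ by "squeezing" the base path $\gamma$ towards one of its endpoints and applying the transport structure $t_A$ along the squeezed path. The key observation is that for a fixed $j:\II$, the path $i\mapsto \gamma(i\land j)$ runs from $\gamma(0)$ to $\gamma(j)$, and crucially it is \emph{constant} whenever $\gamma$ is constant on $\varphi$ (since then $\gamma(0)=\gamma(i\land j)$); so $t_A$ can legitimately be applied to it with cofibration $\varphi$. Setting
\[
  f_A~\varphi~\gamma~a_0 \;=\; \lambda(j:\II).\; t_A~(\lambda(i:\II).\,\gamma(i\land j))~\varphi~a_0,
\]
we get, at $j=0$, the path $\lambda(i:\II).\,\gamma(0)$, which is constant on $\varphi$ but not definitionally the constant path at $\gamma(0)$; to fix the boundary $f_A~\varphi~\gamma~a_0~(0)=a_0$ on the nose I would instead compose with the trivial transport $t_A~\gamma~0_\FF$, exactly as in the proof of Lemma~\ref{lemma:compfromtransp}, or equivalently add $\varphi\lor(j=0)$ as the cofibration argument so that the $j=0$ face reduces to $a_0$. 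At $j=1$ we land on $t_A~\gamma~\varphi~a_0$ as required, and constancy on $\varphi$ is immediate because each individual transport is the identity on $\varphi$.

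For $sq_A$, the idea is symmetric but squeezes $\gamma$ towards its \emph{right} endpoint: for fixed $j:\II$ consider the path $i\mapsto\gamma(j\lor i)$, which runs from $\gamma(j)$ to $\gamma(1)$ and is again constant on $\varphi$. Define
\[
  sq_A~\varphi~\gamma~a \;=\; \lambda(j:\II).\; t_A~(\lambda(i:\II).\,\gamma(j\lor i))~\varphi~(a(j)).
\]
At $j=0$ this is $t_A~\gamma~\varphi~(a(0))$, and at $j=1$ the squeezed path is the constant path at $\gamma(1)$, so after the same endpoint-correction trick (using cofibration $\varphi\lor(j=1)$, or composing with a trivial transport at $j=1$) we obtain $a(1)$; both endpoints lie in $A\gamma(1)$, as demanded. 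Constancy on $\varphi$ holds pointwise since $t_A\,\cdots\,\varphi\,\cdots$ is the identity on $\varphi$ and $a$ is constant on $\varphi$.

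The routine but slightly delicate part is making the definitional equalities at the endpoints hold \emph{strictly} rather than just up to the implicit path structure: a naive squeeze gives $t_A$ of a path that is propositionally, but not judgmentally, constant, so one must either enlarge the cofibration to absorb the degenerate face or pre/post-compose with a trivial-cofibration transport, and then check that the other endpoint and the constancy-on-$\varphi$ conditions are unaffected. I expect this bookkeeping — tracking which of $j=0$, $j=1$, and $\varphi$ each correction is responsible for, and verifying the four boundary conditions of the two operations — to be the main obstacle; the conceptual content is entirely in the two connection terms $\gamma(i\land j)$ and $\gamma(j\lor i)$.
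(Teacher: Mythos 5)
Your final definitions --- enlarging the cofibration to $\varphi\lor(j=0)$ in $f_A$ and $\varphi\lor(j=1)$ in $sq_A$ --- coincide exactly with the paper's proof (up to renaming the bound interval variable). One small correction to the reasoning: at $j=0$ the squeezed path $\lambda(i:\II).\,\gamma(i\land 0)$ \emph{is} already judgmentally the constant path at $\gamma(0)$ (since $i\land 0=0$ in the De~Morgan algebra), but the transport structure $t_A$ is only guaranteed to be the identity where the cofibration argument holds, not along every constant base path; this is why enlarging the cofibration is the right fix, and also why your proposed alternative of pre-composing with a trivial transport $t_A~\gamma~0_\FF$ would not by itself force the $j=0$ face to equal $a_0$ strictly.
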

\begin{proof}
  We define
  \[
    f_A~\varphi~\gamma~a_0 = \lambda (i:\II)\, t_A~(\lambda (j:\II)\gamma(i\wedge j))~(\varphi\vee
    (i=0))~a_0
  \]
  which connects $a_0$ to $t_A~\gamma~\varphi~a_0$ and is constant on
  $\varphi$, and
  \[
    sq_A~\varphi~\gamma~a = \lambda (i:\II)\, t_A~(\lambda (j:\II)\gamma(i\vee j))~(\varphi\vee (i=1))~a(i)
  \]
  which connects $t_A~\gamma~\varphi~a(0)$ to $a(1)$ and is constant
  on $\varphi$.
\end{proof}

The relationship between these operations can be displayed as:
\begin{mathpar}
  \begin{tikzpicture}[xscale=5,yscale=2]
    \node (g0) at (0,-0.4) {$\gamma(0)$};
    \node (g1) at (1,-0.4) {$\gamma(1)$};
    \node (A11) at (1,1) {$a(1)$};
    \node (A00) at (0,0) {$a(0)$};
    \node (A10) at (1,0) {$t_A~\gamma~\varphi~a(0)$};
    \path[->,font=\scriptsize,>=angle 90]
      (A00) edge node[above left]{$a$} (A11)
      (A10) edge node[right]{$sq_A~\varphi~\gamma~a$} (A11)
      (A00) edge node[below]{$f_A~\varphi~\gamma~a(0)$} (A10)
      (g0) edge node[below]{$\gamma$} (g1);
    \end{tikzpicture}
\end{mathpar}
so that $sq_A$ can be though of as an operation which ``squeezes'' the
path $a$ into the fiber over $\gamma(1)$.

\begin{corollary}
  Given two families of types $C$ and $A$ over $\Gamma$ with transport
  structures $t_C$ and $t_A$ respectively, and a map $u:C\rightarrow
  A$ over $\Gamma$, there exists an operation $l~\varphi~\gamma~c_0$
  which is a path in $(A\gamma(1))^{\II}$ constant over $\varphi$ and
  connecting $t_A~\gamma~\varphi~(u\gamma(0)~c_0)$ and
  $u\gamma(1)~(t_C~\gamma~\varphi~c_0)$, given $\gamma$ in
  $\Gamma^{\II}$ constant over $\varphi$ and $c_0$ in $C\gamma(0)$.
\end{corollary}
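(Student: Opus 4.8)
The plan is to obtain this naturality path for $u$ and the transport operations as a short composite of the two operations furnished by Lemma~\ref{squeeze}: first transport $c_0$ forward inside $C$ to get a path, then push that path along $u$, then squeeze the result into the fiber over $\gamma(1)$.

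Concretely, I would first apply the first half of Lemma~\ref{squeeze} to the family $C$ with its transport structure $t_C$: given $\gamma$ in $\Gamma^{\II}$ constant on $\varphi$ and $c_0$ in $C\gamma(0)$, this yields a path $p = f_C~\varphi~\gamma~c_0$ in $\Pi(i:\II)\,C\gamma(i)$, constant on $\varphi$, with $p(0) = c_0$ and $p(1) = t_C~\gamma~\varphi~c_0$. Next I would push $p$ forward along $u$ pointwise, setting $a = \lambda(i:\II).\, u~\gamma(i)~(p~i)$ in $\Pi(i:\II)\,A\gamma(i)$; since $p$ is constant on $\varphi$ and $\gamma$ is constant on $\varphi$, the path $a$ is again constant on $\varphi$, and moreover $a(0) = u~\gamma(0)~c_0$ and $a(1) = u~\gamma(1)~(t_C~\gamma~\varphi~c_0)$. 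Finally I would feed $a$ into the squeeze operation of Lemma~\ref{squeeze} for the family $A$ with transport structure $t_A$: the path $sq_A~\varphi~\gamma~a$ lies in $(A\gamma(1))^{\II}$, is constant on $\varphi$, and connects $t_A~\gamma~\varphi~a(0)$ to $a(1)$, that is, connects $t_A~\gamma~\varphi~(u~\gamma(0)~c_0)$ to $u~\gamma(1)~(t_C~\gamma~\varphi~c_0)$. So the definition would be
\[
  l~\varphi~\gamma~c_0 \;=\; sq_A~\varphi~\gamma~\bigl(\lambda(i:\II).\, u~\gamma(i)~(f_C~\varphi~\gamma~c_0~i)\bigr),
\]
and each required property of $l$ is precisely one of the clauses of Lemma~\ref{squeeze}.

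I do not expect any real obstacle here: the construction is a two-line composite of already-established operations. The only thing that needs care is tracking the ``constant on $\varphi$'' side conditions through the three steps — in particular checking that the path $a$ handed to $sq_A$ is itself constant on $\varphi$, which is what makes the output path constant on $\varphi$ — and this follows immediately from the constancy clauses of Lemma~\ref{squeeze} together with the hypothesis that $\gamma$ is constant on $\varphi$.
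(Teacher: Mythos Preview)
Your proposal is correct and is exactly the paper's argument: the paper's proof is the single sentence ``We apply the $sq_A$ operation and the $f_C$ operation from Lemma~\ref{squeeze} to the path $\lambda (i:\II)\, u\gamma(i)\,(f_C~\varphi~\gamma~c_0~i)$,'' which is precisely your definition of $l~\varphi~\gamma~c_0$. Your added care about propagating the ``constant on $\varphi$'' condition through the intermediate path $a$ is the right thing to check and is left implicit in the paper.
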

\begin{proof}
  We apply the $sq_A$ operation and the $f_C$ operation from Lemma
  \ref{squeeze} to the path $\lambda (i:\II)\, u\gamma(i)\,
  (f_C~\varphi~\gamma~c_0~i).$
\end{proof}

\begin{proposition}
  Given a family of spans $D = (C,A,B,u,v)$ over a type $\Gamma$ such
  that $A$, $B$, and $C$ have transport structures then the family
  $\po(D)$ also has a transport structure, and hence also a
  composition structure by Lemma~\ref{lemma:compfromtransp}.
\end{proposition}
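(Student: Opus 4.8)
The plan is to use Lemma~\ref{lemma:compfromtransp} and reduce to producing a \emph{transport} structure on $\po(D)$. Since $\po(D)$ is defined pointwise, $\po(D)\rho = \po(D\rho)$, and each $\po(D\rho)$ carries the fibrancy structure $\hcomp$, the family $\po(D)$ automatically has a homogeneous composition structure, exactly as for the suspension; so by Lemma~\ref{lemma:compfromtransp} it is enough to build, for every $\gamma$ in $\Gamma^\II$ and every $\varphi$ in $\FF$ with $\gamma$ constant on $\varphi$, a map $\strans_{\po(D)}\,\gamma\,\varphi \co \po(D\gamma(0)) \to \po(D\gamma(1))$ which is the identity on $\varphi$.

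I would obtain this map from the initiality of $\po(D\gamma(0))$, by equipping $\po(D\gamma(1))$ with a $D\gamma(0)$-algebra structure. Its fibrancy structure is $\hcomp$; its point maps are $\inl\circ(\strans_A\,\gamma\,\varphi)\co A\gamma(0)\to\po(D\gamma(1))$ and $\inr\circ(\strans_B\,\gamma\,\varphi)\co B\gamma(0)\to\po(D\gamma(1))$; and for $c_0$ in $C\gamma(0)$, writing $w=\strans_C\,\gamma\,\varphi\,c_0$, the path from $\inl(\strans_A\,\gamma\,\varphi\,(u\gamma(0)\,c_0))$ to $\inr(\strans_B\,\gamma\,\varphi\,(v\gamma(0)\,c_0))$ is obtained by concatenating, using $\hcomp$ in $\po(D\gamma(1))$, the three paths $\inl(l_A\,\varphi\,\gamma\,c_0)$, $\push_{D\gamma(1)}\,w$, and the reverse of $\inr(l_B\,\varphi\,\gamma\,c_0)$, where $l_A$ and $l_B$ are the paths given by the Corollary above applied respectively to $u\co C\to A$ and to $v\co C\to B$. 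Explicitly,
\[
  p\,c_0\,i \;=\; \hcomp\;\bigl(\varphi\lor(i{=}0)\lor(i{=}1)\bigr)\;\bigl[\,\varphi\mapsto\push_{D\gamma(1)}\,w\,i,\;\;(i{=}0)\mapsto\inl(l_A\,\varphi\,\gamma\,c_0\,(1{-}j)),\;\;(i{=}1)\mapsto\inr(l_B\,\varphi\,\gamma\,c_0\,(1{-}j))\,\bigr]\;\bigl(\push_{D\gamma(1)}\,w\,i\bigr),
\]
with $j$ the fresh composition variable. Using the endpoints of $l_A$, $l_B$, and $\push$, one checks that this open box is compatible on overlaps and that $p\,c_0$ has the two required endpoints, so this is a genuine $D\gamma(0)$-algebra structure; I take $\strans_{\po(D)}\,\gamma\,\varphi$ to be the induced algebra map.

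It remains to check that $\strans_{\po(D)}\,\gamma\,\varphi$ is the identity on $\varphi$. On $\varphi$ the path $\gamma$ is constant, so $D\gamma(0)=D\gamma(1)$, the operations $\strans_A\,\gamma\,\varphi$, $\strans_B\,\gamma\,\varphi$, $\strans_C\,\gamma\,\varphi$ all become identities, the paths $l_A\,\varphi\,\gamma\,c_0$ and $l_B\,\varphi\,\gamma\,c_0$ become constant (the \emph{constant on $\varphi$} clause of the Corollary), and $\push_{D\gamma(1)}\,w=\push_{D\gamma(0)}\,c_0$. Since $\varphi$ is one of the faces of the $\hcomp$ above, on $\varphi$ the displayed formula degenerates judgmentally to its base $\push_{D\gamma(0)}\,c_0\,i$. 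Hence on $\varphi$ the $D\gamma(0)$-algebra structure we put on $\po(D\gamma(1))$ is literally the canonical structure of $\po(D\gamma(0))$, and since the identity is trivially a $D\gamma(0)$-algebra endomorphism of $\po(D\gamma(0))$, uniqueness in the initiality property forces the induced map to be the identity on $\varphi$; that is, $\varphi\Imp\strans_{\po(D)}\,\gamma\,\varphi\,x=x$. This gives the transport structure, and the composition structure then follows by Lemma~\ref{lemma:compfromtransp}.

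The step I expect to be the main obstacle is precisely this strictness requirement: a naive threefold concatenation of paths would be constant on $\varphi$ only up to a homotopy rather than judgmentally. The two design choices that make it work are (i) that the auxiliary paths $l_A$ and $l_B$ coming from the Corollary are already constant on $\varphi$, and (ii) that $\varphi$ is kept among the faces of the homogeneous composition used to glue them, so that this composition collapses to its base over $\varphi$. Everything else --- verifying face compatibility of the box and the two endpoints of $p\,c_0$ --- is routine bookkeeping with the De~Morgan operations on $\II$.
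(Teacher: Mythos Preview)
Your proposal is correct and follows essentially the same approach as the paper: equip $\po(D\gamma(1))$ with a $D\gamma(0)$-algebra structure using the Corollary, observe that this structure coincides with the canonical one on $\varphi$, and invoke initiality to obtain a transport map that is the identity on $\varphi$. You have spelled out explicitly the $\hcomp$-based path component and the strictness argument that the paper leaves implicit (and only elaborates later in the syntactic Section~\ref{subsec:pushouts}), but the strategy is the same.
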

\begin{proof}
  We use the previous corollary to provide a structure of
  $D\gamma(0)$-algebra on $\po(D)\gamma(1)$, structure which coincides
  with the one of $\po(D)\gamma(0)$ on $\varphi$. By initiality we get
  a map $\po(D)\gamma(0)\rightarrow \po(D)\gamma(1)$ which is the
  identity on $\varphi$, and is the desired transport function. (For a
  more detailed explanation see the syntactic presentation in
  Section~\ref{subsec:pushouts}.)
\end{proof}

\subsection{Existence of initial algebras}

We now explain the proof of Proposition \ref{init1} asserting the
existence of a suitable initial algebra.  We cannot prove this in an
abstract way, but we need to use the fact that we are working with
presheaf models over a small base category $\CC$, in our case the
Lawvere theory of the theory of De~Morgan algebras.  We write
$I,J,K,\dots$ for the objects of $\CC$.  We only describe the case of
$S^1$-algebra here, but all other cases follow the same pattern. The
interested reader may consult Appendix~\ref{sec:appendix1} for the
proofs for the other higher inductive types. The argument we give can
be seen as a constructive version of the small object argument
\cite{Swan14}, and it crucially uses the fact that both $\FF(I)$ and
$\II(I)$ have decidable equality.  Classically we could use Garner's
small object argument \cite{Garner09} as is for instance done in
\cite{LumsdaineShulman}.

We first define inductively a family of sets $\Sp^1_{\text{pre}}(I)$ which is an
``upper approximation'' of the circle, together with maps
$\Sp^1_{\text{pre}}(I) \to \Sp^1_{\text{pre}}(J)$, $u \mapsto uf$ for
$f \co J \to I$. An element of $\Sp^1_{\text{pre}}(I)$ is of the form:
\begin{itemize}
  \item $\base$, or
  \item $\LOOP~r$ with $r\neq 0,1$ in $\II(I)$, or
  \item $\hcomp~[\phi\mapsto u]~u_0$ with $\phi\neq 1$ in $\FF(I)$ and
    $u_0$ in $\Sp^1_{\text{pre}}(I)$ and $u$ a family of elements
    $u_{f,r}$ in $\Sp^1_{\text{pre}}(J)$ for $f:J\rightarrow I$ such
    that $\phi f = 1$ and $r$ in $\II(J)$ .
\end{itemize}

In this way an element of $\Sp^1_{\text{pre}}(I)$ can be seen as a
well-founded tree. Note that we do not yet require that the sides in
$\hcomp$ match up with the base. In order to express this we first
define $uf$ in $\Sp^1_{\text{pre}}(J)$ for $f \co J \to I$ by
induction on $u$:
\begin{align*}
 \base f =& ~ \base \\
 (\LOOP~r) f =&
  \begin{cases}
   \LOOP~(rf) & \text{if } r f \neq 0 \text{ and } rf \neq 1 \\
   \base      & \text{otherwise}
 \end{cases} \\
  (\hcomp~[\phi\mapsto u]~u_0)f =&
  \begin{cases}
   u_{f,1} & \text{if } \phi f = 1 \\
   \hcomp~[\phi f\mapsto uf^+]~(u_0 f) & \text{otherwise}
 \end{cases}
\end{align*}
where $uf^+$ is the family $(uf^+)_{g,r} = u_{fg,r}$ for $g \co K \to
J$.

Note that we may not have in general $(vf)g = v(fg)$ for $v$ in
$\Sp^1_{\text{pre}}(I)$ and $f:J\to I$ and $g:K\to J$.  We then
inductively define the subsets $\Sp^1(I) \subseteq
\Sp^1_{\text{pre}}(I)$ by taking the elements $\base$, $\LOOP~r$, and
$\hcomp~[\phi\mapsto u]~u_0$ such that $u_0 \in \Sp^1(I)$, $u_{f,r}
\in \Sp^1(J)$, for $f:J\to I$ satisfying $u_0 g = u_{g,0}$ for $g:J\to
I$ and $u_{f,r} g = u_{fg,rg}$ for $f : J \to I$ and $r$ in $\II(J)$
and $g : K \to J$.  This defines a cubical set $\Sp^1$, such that
$\Sp^1(I)$ is a subset of $\Sp^1_{\text{pre}}(I)$ for each $I$.

As defined $\Sp^1$ has a structure of an $S^1$-algebra.
Let us sketch that $\Sp^1$ is also the initial $S^1$-algebra in this
presheaf model.  Note that initiality stated internally is a statement
quantifying over all possible types in a universe, which for
simplicity we did not make explicit.  Unfolding this internal
quantification amounts to constructing (suitably unique) natural
transformations $\elim \co \Sp^1 \to A$ where $A$ is a presheaf over
the category of elements of $\yoneda(I)$ equipped with a homogeneous
composition structure and sections $b$ in $A$ and $l$ in $A^\II$
connecting $b$ to itself; moreover, these natural transformations
$\elim$ should be stable under substitutions $\yoneda (f) \co \yoneda
(J) \to \yoneda(I)$.  This works more generally for $A$ being a
presheaf over any cubical set $\Gamma$, not only representables:
$\elim\,\rho\,u$ in $X(I,\rho)$ for $\rho$ in $\Gamma (I)$ and $u$ in
$\Sp^1 (I)$ is defined by induction on the height of the well-founded
tree $u$ simultaneously with verifying $(\elim\,\rho\,u) f =
\elim\,(\rho f)\,(u f)$ for $f \co J \to I$.  Note that the height of
$u f$ does not increase.  Each case in the definition is guided by the
uniqueness condition.

\subsection{Universes}
\label{sec:universes}

As shown externally in \cite{CohenCoquandHuberMortberg15,OrtonPitts16}
(and internally in \cite{LicataEtAl18}) we can define in the presheaf
model a cumulative hierarchy of (univalent and fibrant) universes
$U_n$ which classify families of types of a given size with a
composition structure. Since the way we build initial algebras
preserves the universe level, our definition, e.g., of the suspension
can be seen as an operation $\susp:U_n\rightarrow U_n$.

 Let us expand this point. Let $\mathcal{U}_n$ be a cumulative sequence of
Grothendieck universes (or constructive analog of them \cite{Aczel99}) in the
underlying set theory. If $\Gamma$ is a presheaf on $\CC$
and $A$ a $\mathcal{U}_n$-valued presheaf on the category of elements of $\Gamma$
with a composition structure $c_A$, the suspension operation builds
a $\mathcal{U}_n$-valued presheaf $\susp{A}$ with composition structure
$\susp{c_A}$ such that if $\sigma:\Delta\rightarrow \Gamma$ we have
$(\susp{A})\sigma = \susp {(A\sigma)}$ and
$(\susp{c_A})\sigma  = \susp{(c_A\sigma)}$. An element of $U_n(I)$
is then a pair $A,c_A$ where $A$ is a $\mathcal{U}_n$-valued presheaf on the category
of elements of $\yoneda(I)$ and $c_A$ a composition structure on $A$, and $\susp$
can then be seen as a natural transformation $U_n\rightarrow U_n$.

Thus, we
have presented a semantics of a large class of higher inductive types
with univalent universes. (As shown in \cite{HottBook13}, the
univalence axiom is essential for any non trivial use of the higher-dimensional
structure  of higher inductive types.)

\section{Higher inductive types in cubical type theory}
\label{sec:hits}

In this section we discuss the extensions to cubical type theory by
higher inductive types. We begin by recalling the basic notions of
cubical type theory~\cite{CohenCoquandHuberMortberg15}.

\subsection{Background: cubical type theory}
\label{sec:cubicaltt}

Cubical type theory extends a dependent type theory with a universe
$\UU$ closed under $\Pi$- and $\Sigma$-types with $\Path{}$-types,
composition operations and $\Glue$-types.

The $\Path{}$-types internalize the idea from homotopy type theory
that equalities correspond to paths. We write
$\Path{}\,A\,a\,b$ for the type of paths in $A$ with endpoints $a$ and
$b$. These types behave like function types and have both abstraction
(written $\nabs i \, t$ for $t$ with $i$ abstracted) and application
(written using juxtaposition). The path abstraction binds ``dimension
variables'' ranging over an abstract interval $\II$ specified by the
grammar:
\[
\begin{array}{lcl}
  r,s & ::= & 0 \mid 1 \mid i \mid 1 - r \mid r \wedge s \mid r \vee s
\end{array}
\]

The set $\II$ is a De~Morgan algebra with the $1-r$ operation as
De~Morgan involution.  A type in a context with dimension variables
$i_1,\dots,i_n : \II$ should be thought of as an $n$-dimensional cube
and the substitutions $\subst i 0$ and $\subst i 1$ give the faces of
this cube. A substitution $\subst i j$ renames the dimension variable
$i$ in $A$ into $j$ and as there are no injectivity constraints on
these renaming substitutions one can perform substitutions which give
a ``diagonal'' of a cube (i.e., if $A$ is a square depending on $i, j
: \II$, then $A \subst i j$ is a diagonal). The $\wedge$ and $\vee$
operations are called \emph{connections} and provide convenient ways
of building higher dimensional cubes from lower dimensional ones. For
instance, if $A$ is a line depending on $i$, then $A \subst i {i \land
  j}$ is the interior of the square:
\begin{equation*}
  \begin{tikzpicture}[baseline=10,xscale=3,yscale=2]
    \node (A01) at (0,1) {$A \subst i 0 \subst j 1$};
    \node (A11) at (1,1) {$A \subst i 1 \subst j 1$};
    \node (A00) at (0,0) {$A \subst i 0 \subst j 0$};
    \node (A10) at (1,0) {$A \subst i 1 \subst j 0$};
    \node (c) at (0.5,0.5) {$A\subst{i}{i \land j}$};
    \path[->,font=\scriptsize,>=angle 90]
      (A01) edge node[above]{$A \subst i i$} (A11)
      (A00) edge node[left]{$A\subst i 0$} (A01)
      (A10) edge node[right]{$A \subst{i}{j}$} (A11)
      (A00) edge node[below]{$A\subst i 0$} (A10);
  \end{tikzpicture}
  \qquad %
  \quad %
    \begin{tikzpicture}[->, scale=0.75]
      \draw (0,0) -- node [left] {$j$} (0,1);
      \draw (0,0) -- node [below] {$i$} (1,0);
    \end{tikzpicture}
\end{equation*}

The face lattice $\FF$ is a distributive lattice generated by formal
symbols $(i = 0)$ and $(i = 1)$ with the relation $(i = 0) \wedge (i =
1) = 0_\FF$. The elements of the face lattice can be described by the
grammar:
\[
\begin{array}{lcl}
  \varphi,\psi & ::= & 0_\FF \mid 1_\FF \mid (i = 0) \mid (i = 1)
                    \mid \varphi \land \psi \mid \varphi \lor \psi \\
\end{array}
\]
There is a canonical lattice map $\II \rightarrow \FF$ sending $i$ to
$(i=1)$ and $1-i$ to $(i=0)$. We write $(r=1)$ for the image of $r :
\II$ in $\FF$ and we write $(r=0)$ for $((1-r) = 1)$.

The judgment $\Gamma \der \phi : \FF$ says that $\phi$ is a face
formula involving only the dimension variables declared in
$\Gamma$. Given a formula $\phi$ we can \emph{restrict} a context
$\Gamma$ and obtain a new context written $\Gamma,\phi$ (assuming that
$\phi$ only depends on the dimension variables in $\Gamma$). We call
terms and types in such a restricted context \emph{partial}. These
restricted contexts are used for specifying the boundary of higher
dimensional cubes, for example, if $A$ is a line depending on $i$, the
partial type $i : \II, (i = 0) \lor (i = 1) \der A$ is the two
endpoints of $A$. If $\Gamma,\phi \der v : A$, we write $\Gamma \der u
: A[\phi \mapsto v]$ to denote the two judgments:
\begin{mathpar}
  \Gamma \der u : A
  \and
  \Gamma, \phi \der u = v : A
\end{mathpar}

Using this we can express the typing rule for the composition
operations:
\begin{mathpar}
  \inferrule {\Gamma, i : \II \der A \\
              \Gamma \der \phi : \FF \\
              \Gamma, \varphi, i : \II \der u : A \\
              \Gamma \der u_0 : A\subst i 0[ \varphi \mapsto u \subst i 0 ]}
             {\Gamma \der \comp^i~A~[ \varphi \mapsto u ]~u_0 :
              A\subst i 1[ \varphi \mapsto u\subst i 1 ]}
\end{mathpar}
This operation takes a line type $A$, a formula $\phi$, a partial line
term $u$ and a term $u_0$ of type $A \subst i 0$ (note that $i$ may
occur freely in $A$ and $u$). Furthermore we require that $\Gamma,
\phi \der u_0 = u \subst i 0 : A \subst i 0$.  The result is a term in
$A \subst i 1$ such that $\comp^i~A~[ \varphi \mapsto u ]~u_0 =
u\subst i 1$ on $\Gamma,\phi$. The computation rules for the
composition operations are given as judgmental equalities defined by
cases on the type $A$.

The intuition is that $u$ specifies the sides of an open box while
$u_0$ specifies the bottom of the box and the fact that the sides have
to be connected to the bottom is expressed by the equation relating
$u_0$ and $u \subst i 0$. The result of the composition operation is
then the lid of this open box. For example, given paths $p$, $q$, and
$r$ as in:
\begin{mathpar}
  \begin{tikzpicture}[baseline=30,xscale=2,yscale=2]
    \node (A01) at (0,1) {$c$};
    \node (A11) at (1,1) {$d$};
    \node (A00) at (0,0) {$a$};
    \node (A10) at (1,0) {$b$};
    \path[->,font=\scriptsize,>=angle 90]
      (A00) edge node[left]{$q \; i$} (A01)
      (A10) edge node[right]{$r \; i$} (A11)
      (A00) edge node[below]{$p \; j$} (A10);
    \path[->,font=\scriptsize,>=angle 90,dashed]
      (A01) edge node[above]{$$} (A11);
  \end{tikzpicture}
  \and %
  \hbox{\begin{tikzpicture}[->, scale=0.5, baseline=10]
    \draw (0,0) -- node [left] {$i$} (0,1);
    \draw (0,0) -- node [below] {$j$} (1,0);
  \end{tikzpicture}}
\end{mathpar}
the composition $\comp^i~A~[(j = 0) \mapsto q \; i, (j = 1) \mapsto r
\; i]~(p\;j)$ is the dashed line at the top of the
square.\footnote{Note that we are using a notation for the "system"
  $[(i = 0) \mapsto q \; j, (i = 1) \mapsto r \; j]$. Formally this is
  given by the formula $(i = 0) \lor (i = 1)$ and a partial element
  with endpoints $q \; j$ and $r \; j$.} Here $p \; j$ is a line in $A
\subst i 0$ while $q \; i$ and $r \; i$ are lines in $A \subst j 0$
and $A \subst j 1$, respectively. The resulting composition is then a
line in $A \subst i 1$.

The composition operations allows us to define transport from a line
type:
\begin{mathpar}
  \inferrule {\Gamma, i : \II \der A \\
              \Gamma \der u_0 : A\subst i 0}
             {\Gamma \der \transport^i~A~u_0 = \comp^i~A~[]~u_0 : A\subst i 1}
\end{mathpar}

Combined with ``contractibility of singletons'' (which is directly
provable using a connection) we get the induction principle for
$\Path{}$-types, which means that they behave like Martin-Löf's
identity types (modulo the computation rule for the induction
principle which only holds up to a $\Path{}$).

The $\Glue$-types allow us to prove both the univalence axiom and that
the universe has a composition operation, however as they do not play
an important role in this paper we omit them from this introduction to
cubical type theory.

\subsection{A common pattern for higher inductive types}
\label{sec:common-pattern}

All of the examples of higher inductive types that we consider in this
paper follow a common pattern. In this section we sketch this pattern
which can be seen as a first step towards formulating a syntactic
schema for higher inductive types in cubical type theory, however the
precise formulation of this schema and its semantic counterpart is
left as future work.

Each higher inductive type $\gD (\zs:\Ps)$ is specified by a
telescope\footnote{A \emph{telescope} $x_1 : A_1, \dots, x_n : A_n$
  (written as $\xs : \As$) over a context $\Gamma$ is a (possibly
  empty) list of object variable declarations such that
  $\Gamma,\xs : \As$ is a well-formed context, so $\xs : \As$ neither
  contains context restrictions $\Delta,\phi$ nor dimension variables
  $i : \II$.} of parameters $\zs:\Ps$ (over an ambient context
$\Gamma$) and a \emph{list} of constructors $\vec\gc$.  Each $\gc$ in
$\vec\gc$ is specified by the data:
\begin{equation*}
  \gc : (\xs : \As (\zs)) \, %
  \tabs\is \, %
  \gD (\zs) [\phi (\is) \mapsto e (\zs,\xs,\is)]
\end{equation*}
Here the telescope $\xs:\As$ specifies the types of the arguments to
$\gc$, and in the case of \emph{recursive} higher inductive types, as
in, e.g., propositional truncation, $\gD$ might itself appear in
$\As$.  The length of the list of names $\is$ specifies the dimension
of the cube $\gc$ introduces: we say that $\gc$ is a \emph{point},
\emph{path}, or \emph{square} constructor according to whether the
length of $\is$ is $0,1,$ or $2$, respectively.  The data $\phi
\mapsto e$ specifies the \emph{endpoints} of the constructor $\gc$,
with $\phi$ an element of the face lattice $\FF$ whose free variables
are among $\is$, and $e$ is a partial element
\[
  \zs: \Ps, \xs : \As (\zs), \is : \II, \phi (\is) \der e (\zs,\xs,\is) :
  \gD (\zs)
\]
mentioning only previous constructors in the list $\vec\gc$ and
possibly $\hcomp$'s (see below).

For each instance $\us : \Ps$ of the telescope $\zs : \Ps$ we say that
$\gD (\us)$ is a type and we will have an introduction rule for a
constructor $\gc$ specified as above
\begin{mathpar}
  \inferrule { %
    \vs : \As (\us) \\
    \rs : \II %
  } { \gc\,\vs\,\rs : \gD (\us)}%
\end{mathpar}
and a judgmental equality $\gc\,\vs\,\rs = e (\us,\vs,\rs) : \gD
(\us)$ in case we additionally have $\phi (\rs) = 1 : \FF$ (all in an
ambient context).  Note that this judgmental equality for $\gc$
requires us to make sure that whenever we define a function $\gf : \Pi
(x : \gD (\us))\, P(x)$ that its semantics preserve this equality, so
that
\[
  \phi (\rs) \der \gf (\gc\,\vs\,\rs) = \gf (e (\us,\vs,\rs)) : P
  (\gc\,\vs\,\rs).
\]
In particular, this requirement has to be taken care of in the typing
rules for the eliminator for $\gD (\us)$. The general formulation of
this is left as future work as it would require us to extend cubical
type theory with something similar to the "extension types"
of~\cite{RiehlShulman17}.

Recall from Section~\ref{sec:semantics} that we
decomposed the composition structure for higher inductive types into a
homogeneous composition structure and a transport structure.  The
homogeneous composition structure was introduced as constructors and
the same is reflected in the syntax by adding a rule
\begin{mathpar}
  \inferrule {%
    \Gamma \der \us : \Ps \\
    \Gamma \der \phi : \FF\\\\
    \Gamma, i : \II, \phi \der v : \gD (\us) \\
    \Gamma \der v_0 : \gD (\us) [ \phi \mapsto v \subst i 0 ]
  } %
  { \Gamma \der \hcomp^i_{\gD (\us)} \, [ \phi \mapsto v] \, v_0 : \gD
    (\us) [\phi \mapsto v \subst i 1] }
\end{mathpar}
where the key point is that $i$ may be free in $v$, but \emph{not} in
$D(\us)$, as opposed to the composition operations where $i$ may be
free in both $v$ and $D(\us)$. In the examples we will not repeat
these homogeneous composition constructors for every higher inductive
type we consider and they are always assumed to be included as part of
the definition of the higher inductive type under consideration.

We could do the same for traditional inductive types like the natural
numbers and have a constructor $\hcomp^i_{\N}$ instead of explaining
composition by recursion.  We can prove that this ``weaker'' form of
natural numbers type is equivalent, and hence equal (by univalence) to
the regular one.

To reflect the transport structure in the syntax we specify a
$\transp$ operation for higher inductive types $A := \gD (\us)$ given
$\Gamma, i : \II \der \us : \Ps$ by the rule:
\begin{mathpar}
  \inferrule %
  { \Gamma \der \phi : \FF \\
    \Gamma, i : \II, \phi \der A = A \subst i 0 \\
    \Gamma \der u_0 : A \subst i 0
  }%
  { \Gamma \der \transp^i\,A\,\phi\,u_0 : A \subst i 1 [\phi \mapsto u_0] }%
\end{mathpar}
Note that since $\Gamma, i : \II, \phi \der A = A \subst i 0$ also
$\Gamma, \phi \der A \subst i 0 = A \subst i 1$ (and hence this
equation also holds in context $\Gamma, i : \II, \phi$).

Similar to how the transport structure is explained in the semantics
by recursion on the argument we will add a judgmental equality for
each of the possible shapes of $u_0$: one for each constructor $\gc$
and one for the $\hcomp$ constructor:
\begin{multline*}
  \transp^iA\,\phi\,(\hcomp^j_{A\subst i 0}\,[\psi \mapsto u]\,u_0) =
  \\\hcomp^j_{A \subst i 1}\,[\psi \mapsto
  \transp^iA\,\phi\,u]\,(\transp^iA\,\phi\,u_0)
\end{multline*}
(Note that we can assume that $i \neq j$ as we can always rename one
of them as they are both bound.) As the $\hcomp$ case is the same for
all examples we omit it from the definition of $\transp$ for the
higher inductive types considered in Section~\ref{sec:examples}.

We can define a derived ``$\squeeze$'' operation analogous to $sq_A$
in the proof of Lemma~\ref{squeeze}:
\begin{mathpar}
  \inferrule %
  { \Gamma \der \phi : \FF \\
    \Gamma, i : \II,\phi \der A = A \subst i 0 \\
    \Gamma, i : \II \der a : A } %
  { \Gamma, i : \II \der \squeeze^i\,A\,\phi\,a := %
    \transp^j\,A\subst i {i \lor j}\,(\phi \lor (i=1))\,a : A \subst i 1 } %
\end{mathpar}
This operation satisfies
\begin{align*}
  (\squeeze^i\,A\,\phi\,a) \subst i 0 &= \transp^j\,A\subst i j\,\phi\,a \subst i 0\\
  (\squeeze^i\,A\,\phi\,a) \subst i 1 &= a \subst i 1
\end{align*}
and the induced path is constantly $a$ on $\phi$.

Assuming that we have defined $\transp$ for a higher inductive type
$\Gamma, i : \II \der A$ we can define the composition operation:
\begin{mathpar}
  \inferrule %
  {
    \Gamma \der \phi : \FF \\
    \Gamma, i : \II,\phi \der u : A\\
    \Gamma \der u_0 : A \subst i 0 [\phi \mapsto u \subst i 0 ] } %
  { \Gamma \der \comp^i\,A\,[\phi \mapsto u]\,u_0 := \\%
    \hcomp^i_{A \subst i 1}\,[ \phi \mapsto \squeeze^i\,A\,0_\FF\,u]
    \,(\transp^i\,A\,0_\FF\,u_0) : A \subst i 1 }
\end{mathpar}
This satisfies the required judgmental computation rule for $\comp$
because of the computation rules for $\hcomp$ and $\squeeze$. This
means that in order to define the composition operation for a higher
inductive type we only need to define the $\transp$ operation when
applied to constructors.

\medskip

Note, that we can always define a $\transp$ operation for any type
$\Gamma, i : \II \der A$ that already has a composition operation by:
\begin{mathpar}
  \inferrule %
  { \Gamma \der \phi : \FF \\
    \Gamma, i : \II,\phi \der A = A \subst i 0\\
    \Gamma \der u_0 : A \subst i 0 }%
  { \Gamma \der \ctransp^i\,A\,\phi\,u_0 := %
    \comp^i\,A\,[ \phi \mapsto u_0 ]\,u_0 : A \subst i 1 [\phi \mapsto
    u_0] }%
\end{mathpar}

In line with Lemma~\ref{squeeze} a corresponding ``filling'' operation
which connects the input of $\transp$ to its output can also be
derived:
\begin{mathpar}
  \inferrule %
  { \Gamma \der \phi : \FF \\
    \Gamma, i : \II, \phi \der A = A \subst i 0 \\
    \Gamma \der u_0 : A \subst i 0
  }%
  { \Gamma,i : \II \der \Transp^i\,A\,\phi\,u_0 := %
    \transp^j\,A \subst i {i \land j}\,(\phi \lor (i=0))\,u_0: A}%
\end{mathpar}
Note that $\Gamma,i:\II,\phi \der A = A \subst i 0$ entails
\[
  \Gamma, i : \II, j : \II, \phi \lor (i = 0) \der A \subst i {i \land
    j} = A \subst i {i \land j} \subst j 0.
\]
This operation satisfies
\begin{align*}
  (\Transp^i\,A\,\phi\,u_0) \subst i 0 &= u_0\\
  (\Transp^i\,A\,\phi\,u_0) \subst i 1 &=  \transp^j\,A\subst i j\,\phi\,u_0
\end{align*}
and the induced path is constantly $u_0$ on $\phi$. We write
$\cTransp$ for the corresponding operation defined using $\ctransp$.

\subsection{Examples of higher inductive types}
\label{sec:examples}

In this section we describe how to extend cubical type theory with the
circle and spheres, torus, suspensions, propositional truncation, and
pushouts.  As with all the other type formers we have to explain their
formation, introduction, elimination, and computation rules, as well
as how composition computes.  All of these examples follow the common
pattern presented in the previous section.

\subsubsection{The circle and spheres}
\label{subsec:spheres}

The extension of cubical type theory with the circle and spheres was
sketched in~\cite[Section~9.2]{CohenCoquandHuberMortberg15} and we
elaborate on this here.

\paragraph{Formation}

In order to extend the theory with the circle we first add it as a
type by:
\begin{mathpar}
  \inferrule { {} } {\der \Sp^1} %
  \and
  \inferrule { {} } {\Sp^1 : \UU} %
\end{mathpar}

\paragraph{Introduction}

The circle is generated by a point and a path constructor:
\begin{mathpar}
  \inferrule { {} } {\base:\Sp^1} %
  \and %
  \inferrule {r : \II} {\LOOP\,r : \Sp^1}
\end{mathpar}
with the judgmental equalities $\LOOP\,0 = \LOOP\,1 = \base$ so that
$\LOOP$ connects the point to itself.

\paragraph{Elimination}
Given a dependent type $x : \Sp^1 \der P(x)$, a term $b : P (\base)$
and a path $i : \II \der l : P (\LOOP\,i) [(i=0) \lor (i=1) \mapsto
\,b]$ we can define $\gf : \Pi (x : \Sp^1)\, P (x)$ by cases:
\begin{mathpar}
  \gf\,\base = b %
  \\
  \gf\,(\LOOP\,r) = l\,r
\end{mathpar}
and for the $\hcomp$ constructor:
\[
  \gf\,(\hcomp^i_{\Sp^1}\,[\phi \mapsto u]\,u_0) = \comp^i\,P
  (v)\,[\phi \mapsto \gf\,u]\,(\gf\,u_0)
\]
where w.l.o.g.\ $i$ is fresh and:
\begin{align*}
  v &:= \hComp^i\,\Sp^1\,[\phi \mapsto u ]\,u_0
  \\ %
    &\phantom{:}= \hcomp^j_{\Sp^1} [\phi \mapsto u \subst i {i \land j}, (i=0)
      \mapsto u_0]\,u_0
\end{align*}

As the equation for the eliminator applied to an $\hcomp$ is analogous
for all the other higher inductive considered here we will omit it in
the sequel.

Using this we can directly define the eliminator:
\begin{mathpar}
  \inferrule {x : \Sp^1 \der P(x) \\
              b : P(\base) \\
              l : \Path^i\,P(\LOOP\,i)\,b\,b \\
              u : \Sp^1}
            { \Spelim{1}{x.P}\,b\,l\,u : P(u)}
\end{mathpar}
where $\Path^i{}$ denotes a dependent path type
(see~\cite[Section~9.2]{CohenCoquandHuberMortberg15}). The judgmental
computation rules then follow from the definition above. Note that as we have
dependent $\Path{}$-types (which behave like heterogeneous equalities)
the $\LOOP$ case of $\gf$ can be expressed directly by an equation
without ``\textsf{apd}'' and $l$ does not involve any $\transport$ as
opposed to~\cite[Section 6.4]{HottBook13}.

\paragraph{Composition}
As $\Sp^1$ has no parameters we let $\transp^i\,\Sp^1\,\phi\,u_0 =
u_0$. This means that the composition $\comp^i\,\Sp^1\,[\phi \mapsto
u]\,u_0$ computes directly to the constructor $\hcomp^i_{\Sp^1}\,[\phi
\mapsto u]\,u_0$.

\medskip

The higher dimensional spheres, $\Sp^n$, can directly be defined by
generalizing the definition $\Sp^1$ so that $\LOOP$ takes
$r_1,\dots,r_n : \II$. It is trivial to define
$\transp^i\,\Sp^n\,\phi\,u_0$ in analogy with $\Sp^1$. The elimination
is also analogous to that of $\Sp^1$ using an $n$-dimensional cube in
$P (\LOOP\,i_1\,\dots\,i_n)$ for the $\LOOP$ case.

\subsubsection{The torus; two equivalent formulations}
\label{subsec:torus}

We define the torus in two ways, the first one (written $\TT$) is
analogous to $\Sp^2$ and the second (written $\TTF$) is the cubical
analogue of the torus as defined in~\cite[Section
6.6]{HottBook13}. The $\TTF$ torus involves the fibrancy structure of
the $1$-dimensional cells in the $2$-dimensional cell. Higher
inductive types of this kind are not supported
by~\cite{LumsdaineShulman} and we make crucial use of the fact that we
have homogeneous composition as a constructor in order to represent
them.

\paragraph{Formation}
The formation rules for the torus types are given by:
\begin{mathpar}
  \inferrule { {} } {\der \TT} %
  \and
  \inferrule { {} } {\TT : \UU} %
  \and
  \inferrule { {} } {\der \TTF} %
  \and
  \inferrule { {} } {\TTF : \UU} %
\end{mathpar}

\paragraph{Introduction}
The point, lines and square constructors for $\TT$ are given by:
\begin{mathpar}
  \inferrule { {} } {\tb : \TT}
  \and %
  \inferrule {r : \II} {\tp\,r : \TT}
  \and %
  \inferrule {r : \II} {\tq\,r : \TT}
  \and %
  \inferrule {r : \II \\ s : \II}
             {\tsurf\,r\,s : \TT }
\end{mathpar}
satisfying $\tp\,0 = \tp\,1 = \tq\,0 = \tq\,1 = \tb$. The constructors
for $\TTF$ are defined by the same rules as for $\TT$ and we write
them with $\textsf{F}$ as subscript. The square constructor for $\TT$
satisfies $\tsurf\,0\,s = \tsurf\,1\,s = \tp\,s$ and $\tsurf\,r\,0 =
\tsurf\,r\,1 = \tq\,r$ so that we get the square representing the traditional
gluing diagram used in the topological definition of the torus:
\begin{mathpar}
  \begin{tikzpicture}[baseline=30,xscale=2,yscale=2]
    \node (A01) at (0,1) {$\tb$};
    \node (A11) at (1,1) {$\tb$};
    \node (A00) at (0,0) {$\tb$};
    \node (A10) at (1,0) {$\tb$};
    \node (C) at (0.5,0.5) {$\tsurf\,i\,j$};
    \path[->,>=angle 90]
      (A01) edge node[above]{$\tq\,i$} (A11)
      (A00) edge node[left]{$\tp\,j$} (A01)
      (A10) edge node[right]{$\tp\,j$} (A11)
      (A00) edge node[below]{$\tq\,i$} (A10);
  \end{tikzpicture}
  \and %
  \vcenter{\hbox{\begin{tikzpicture}[->, scale=0.75]
    \draw (0,0) -- node [left] {$j$} (0,1);
    \draw (0,0) -- node [below] {$i$} (1,0);
  \end{tikzpicture}}}
\end{mathpar}

Given $s : \II$ we define the composition of $\tfp$ and $\tfq$ by:
\[
  \tfp \cdot_s \tfq := %
  \hcomp^i_{\TTF}\,[(s=0) \mapsto \tfb,(s=1) \mapsto
  \tfq\,i]\,(\tfp\,s)
\]
The composition $\tfq \cdot_s \tfp$ is defined analogously.

The square constructor for $\TTF$ satisfies $\tfsurf\,0\,s =
\tfp\,\cdot_s\,\tfq$, $\tfsurf\,1\,s = \tfq\,\cdot_s\,\tfp$ and
$\tfsurf\,r\,0 = \tfsurf\,r\,1 = \tfb$. This way the $2$-cell $\nabs
{i\,j} \, \tfsurf\,i\,j$ corresponds to a cubical version of the globe
(which can be turned into a square with reflexivity at $\tfb$ as
sides):
\begin{center}
  \begin{tikzpicture}[xscale=3,yscale=2]
    \node (A00) at (0,0) {$\tfb$};
    \node (A10) at (1,0) {$\tfb$};
    \path[->,>=angle 90]
      (A00) edge[bend left=45] node[above]{$\tfp \cdot_j \tfq$} (A10)
      (A00) edge[bend right=45] node[below]{$\tfq \cdot_j \tfp$} (A10);
  \end{tikzpicture}
\end{center}

\paragraph{Elimination}
We write $(i=0/1)$ for $(i = 0) \lor (i=1)$. Given a dependent type $x
: \TT \der P (x)$, a term $b : P (\tb)$, paths $i : \II \der l_p : P
(\tp\,i)[(i=0/1) \mapsto \,b]$ and $i : \II \der l_q : P (\tq\,i)
[(i=0/1) \mapsto \,b]$ and a square $i, j : \II \der s_{pq} :
P(\tsurf\,i\,j)\,[(i=0/1) \mapsto l_p\,j, (j=0/1) \mapsto l_q\,i]$ we
can define $\gf : \Pi(x : \TT)\, P (x)$ by cases:
\begin{align*}
  \gf\,\tb &= b \\
  \gf\,(\tp\,r) &= l_p\,r \\
  \gf\,(\tq\,r) &= l_q\,r \\
  \gf\,(\tsurf\,r\,s) &= s_{pq}\,r\,s 
\end{align*}

Similarly for a dependent type $x : \TTF \der P (x)$, a term $b : P
(\tfb)$, paths $i : \II \der l_p : P (\tfp\,i)[(i=0/1) \mapsto \,b]$
and $i : \II \der l_q : P (\tfq\,i) [(i=0/1) \mapsto \,b]$ we define:
\[
  l_p \cdot_j l_q :=%
  \comp^i\,P(v)\,[(j=0) \mapsto b,(j=1) \mapsto l_q\,i]\,(l_p\,j)
\]
where $v := \hComp^i_{\TTF}\,[(j=0) \mapsto \tfb,(j=1) \mapsto
\tfq\,i]\,(\tfp\,j)$. We define $l_q \cdot_j l_p$ analogously and we
can then require a square $i, j : \II \der s_{pq} : P(\tfsurf\,i\,j)\,
[ (i=0) \mapsto l_p \cdot_j l_q, (i=1) \mapsto l_q \cdot_j l_p,
(j=0/1) \mapsto b ]$. Using this we can define $\gf : \Pi(x : \TTF)\,
P (x)$ by cases like for $\TT$.

Working with $\TT$ is easier than $\TTF$ and the proof that $\TT
\simeq \Sp^1 \times \Sp^1$ has been formalized in \cubicaltt{} by Dan
Licata.\footnote{See:
  \url{https://github.com/mortberg/cubicaltt/blob/hcomptrans/examples/torus.ctt}}
The proof of this is very direct and a lot shorter than the existing
proofs in the literature~\cite{Sojakova16Torus,LicataBrunerie15}. One
first defines maps $f_1 : \TT \to \Sp^1 \times \Sp^1$ and $f_2 :
\Sp^1 \times \Sp^1 \to \TT$ by:
\begin{align*}
  f_1\,\tb            &= (\base,\base)       & f_2\,(\base,\base)       &= \tb \\
  f_1\,(\tp\,r)       &= (\LOOP\,r,\base)    & f_2\,(\LOOP\,r,\base)    &= \tp\,r \\
  f_1\,(\tq\,r)       &= (\base,\LOOP\,r)    & f_2\,(\base,\LOOP\,r)    &= \tq\,r \\
  f_1\,(\tsurf\,r\,s) &= (\LOOP\,r,\LOOP\,s) & f_2\,(\LOOP\,r,\LOOP\,s) &= \tsurf\,r\,s
\end{align*}

These are obviously inverses and the equivalence can be
established. The formal proof in \cubicaltt{} is slightly more
complicated as it is not possible to directly do the double recursion
in $f_2$, but the basic idea is the same. This example shows how
having a system where higher inductive types compute also for
higher constructors makes it possible to simplify formal proofs in
synthetic homotopy theory.

\paragraph{Composition}
As neither $\TT$ or $\TTF$ have any parameters the transport operation
is trivial just like for $\Sp^n$, so the composition operations
reduces to the $\hcomp$ constructors.

\subsubsection{Suspension}
\label{subsec:suspension}

The suspension of a type $A$, written $\susp{A}$, is more involved
than the higher inductive types considered so far as it has a
parameter and just as in the semantics we have to explain the
transport operation.

\paragraph{Formation}
In order to extend the theory with suspensions we add the rules:
\begin{mathpar}
  \inferrule {\der A} {\der \susp{A}} %
  \and
  \inferrule {A : \UU} {\susp{A} : \UU} %
\end{mathpar}

Note that we allow $\susp{A}$ to be in the same universe as $A$, this
is justified by the semantics as explained in
Section~\ref{sec:universes}.

\paragraph{Introduction}
The suspensions are generated by:
\begin{mathpar}
  \inferrule { {} } {\north : \susp{A}} %
  \and %
  \inferrule { {} } {\south : \susp{A}} %
  \and %
  \inferrule {a : A \\
              r : \II}
             {\merid\,a\,r : \susp{A}}
\end{mathpar}
satisfying $\merid\,a\,0 = \north$ and $\merid\,a\,1 = \south$.

\paragraph{Elimination}
Given a dependent type $x : \susp{A} \der P(x)$, terms $n : P(\north)$
and $s : P(\south)$ and a family of paths $x : A, i : \II \der m(x,i)
: P(\merid\,x\,i)[(i=0) \mapsto n,(i=1) \mapsto s]$ we can define a
function $\gf : \Pi(x : \susp{A})\, P(x)$ by cases:
\begin{align*}
  \gf\,\north &= n \\
  \gf\,\south &= s \\
  \gf\,(\merid\,a\,r) &= m(a,r) 
\end{align*}

\paragraph{Composition}
The $\transp^i(\susp{A})\,\phi\,u_0$ operation is defined as
\begin{align*}
  \transp^i(\susp{A})\,\phi\,\north &= \north \\
  \transp^i(\susp{A})\,\phi\,\south &= \south \\
  \transp^i(\susp{A})\,\phi\,(\merid\,a\,r) &= \merid\,(\ctransp^i\,A\,\phi\,a)\,r
\end{align*}

\subsubsection{Propositional truncations}
\label{subsec:truncations}

Another class of interesting higher inductive types are the
truncations; these introduce some new complications as they are
recursive in the sense that the higher constructors quantify over
elements of the type.  The propositional truncation takes a type $A$
and ``squashes'' it to a $0$-type $\inh{A}$ (in the sense that the
equality type of $\inh{A}$ has no interesting structure).

\paragraph{Formation}
In order to extend the theory with propositional truncation we add the
rules:
\begin{mathpar}
  \inferrule {\der A} {\der \inh{A}} %
  \and
  \inferrule {A : \UU} {\inh{A} : \UU} %
\end{mathpar}

\paragraph{Introduction}
The propositional truncation of $A$ is generated by:
\begin{mathpar}
  \inferrule {a : A} %
             {\inc a : \inh A} %
  \and %
  \inferrule {v : \inh A \\
              w : \inh A \\
              r : \II} %
             {\squash\,v\,w\,r : \inh A} %
\end{mathpar}
satisfying $\squash\,v\,w\,0 = v$ and $\squash\,v\,w\,1 = w$.

\paragraph{Elimination}
Given a dependent type $x : \inh{A} \der P(x)$, a family of terms $x :
A \der t(x) : P(\inc\,x)$ and family of paths $v, w : \inh{A}, x :
P(v), y : P(w), i : \II \der p(v,w,x,y,i) : P(\squash\,v\,w\,i)[(i=0)
\mapsto x,(i=1) \mapsto y]$ we can define $\gf : \Pi (x : \inh{A})\,
P(x)$ by cases:
\begin{align*}
  \gf\,(\inc a) &= t(a) \\
  \gf\,(\squash\,v\,w\,r) &= p(v,w,\gf\,v,\gf\,w,r) 
\end{align*}
This is directly structurally recursive and the only difference
compared to $\susp{A}$ is that we have to make a recursive call for
each recursive argument.

\paragraph{Composition}
We define $\transp^i\inh{A}\,\phi\,u_0$ by cases on $u_0$:
\begin{align*}
  \transp^i\inh{A}\,\phi\,(\inc\,a)
  &= \inc\,(\ctransp^i\,A\,\phi\,a) \\
  \transp^i\inh{A}\,\phi\,(\squash\,v\,w\,r)
  &= \squash\,(\transp^i\inh{A}\,\phi\,v)\,(\transp^i\inh{A}\,\phi\,w)\,r 
\end{align*}

The explanation of propositional truncation
in~\cite[Section~9.2]{CohenCoquandHuberMortberg15} used a similar
decomposition, but the introduction of the $\transp$ operation allows
a much simpler formulation of composition.

\subsubsection{Pushouts}
\label{subsec:pushouts}

The definition of pushouts in cubical type theory is similar to the
other parametrized higher inductive types, but special care has to be
taken when defining $\transp$ as the endpoints of the path
constructors involve the parameters to the pushout.

\paragraph{Formation}
We extend the theory with:
\begin{mathpar}
  \inferrule
  { \der A \\ \der B \\ \der C \\
    \pushf : C \to A \\
    \pushg : C \to B} %
  { \der \Push }
  \and
  \inferrule
  { A : \UU \\ B : \UU \\ C : \UU \\
    \pushf : C \to A \\
    \pushg : C \to B} %
  { \Push : \UU }
\end{mathpar}

\paragraph{Introduction}
Given $\pushf : C \to A$ and $\pushg : C \to B$ the pushout is generated by:
\begin{mathpar}
  \inferrule
  { a : A} %
  { \inl a : \Push } %
  \and %
  \inferrule
  { b : B} %
  { \inr b : \Push } %
  \and %
  \inferrule
  { c : C \\ r : \II } %
  { \push\,c\,r : \Push} %
\end{mathpar}
satisfying $\push\,c\,0 = \inl (\pushf \, c)$ and $\push\,c\,1 = \inr
(\pushg \, c)$. Note that $\nabs i \, {\push\,c\,i}$ gives a path
between $\inl (\pushf\,c)$ and $\inr (\pushg \, c)$ for all $c : C$ as
desired.

\paragraph{Elimination}
Given a dependent type $x : \Push \der P(x)$, families of terms $x : A
\der l(x) : P(\inl\,x)$ and $x : B \der r(x) : P(\inr\,x)$ and a
family of paths $x : C, i : \II \der p(x,i) : P(\push\,x\,i)[(i=0)
\mapsto l(\pushf\,x),(i=1) \mapsto r(\pushg\,x)]$ we can define $\gf :
\Pi(x : \Push)\, P(x)$ by cases:
\begin{align*}
  \gf\,(\inl a) &= l(a) \\
  \gf\,(\inr b) &= r(b) \\
  \gf\,(\push\,c\,r) &= p(c,r) 
\end{align*}

\paragraph{Composition}

We write $P$ for $\Push$ and the judgmental computation rules for
$\transp$ are defined by cases:
\begin{align*}
  \transp^i\,P\,\phi\,(\inl a) &= \inl \, (\ctransp^i\,A\,\phi\,\,a) \\
  \transp^i\,P\,\phi\,(\inr b) &= \inr \, (\ctransp^i\,B\,\phi\,\,b) \\
  \transp^i\,P\,\phi\,(\push\,c\,r) &= \hcomp_{P\subst i 1}^i\,S\,(\push\,(\ctransp^i\,C\,\phi\,c) \,r) 
\end{align*}
where  $S$ is the system:
\begin{align*}
  [ & (r=0) \mapsto \squeeze^i\,P\,\phi\,(\inl
      (\pushf\,(\cTransp^i\,C\,\phi\,c))) \, \subst i {1-i},
  \\
    & (r=1) \mapsto \squeeze^i\,P\,\phi\,(\inr
      (\pushg\,(\cTransp^i\,C\,\phi\,c))) \, \subst i {1-i},
  \\
    & (\phi=1) \mapsto \push\,c\,r ]
\end{align*}
Note that the recursive call to $\squeeze$ is justified as it is
applied to a point constructor which has already been defined.

Furthermore, note that the endpoint correction for $\push\,c\,r$ is
necessary as, for example, in the case where $r$ is a dimension
variable $j$ the path constructor $\push\,(\ctransp^i\,C\,\phi\,c)
\,j$ connects
\begin{mathpar}
  \inl(\pushf \subst i 1 \, (\ctransp^i\,C\,\phi\,c))
  \and
  \text{to}
  \and
  \inr(\pushg \subst i 1 \, (\ctransp^i\,C\,\phi\,c))
\end{mathpar}
in direction $j$, but we require something that connects
\begin{mathpar}
  \inl (\ctransp^i\,A\,\phi\,(\pushf \subst i 0 \,c))
  \and
  \text{to}
  \and
  \inr (\ctransp^i\,B\,\phi\,(\pushg\subst i 0\,c))
\end{mathpar}
since the definition of $\transp$ should be stable under the
substitutions $\subst j 0$ and $\subst j 1$. To see that the
correction is correct at $(r=0)$ note that $\squeeze^i\,P\,\phi\,(\inl
(\pushf\,(\cTransp^i\,C\,\phi\,c))) \, \subst i {1-i}$ connects
\begin{mathpar}
  \inl (\pushf \subst i 1\,(\ctransp^i\,C\,\phi\,c))
  \and
  \text{to}
  \and
  \inl (\ctransp^i\,A\,\phi\,(\pushf\subst i 0 \,c))
\end{mathpar}
as required.

\subsection{A variation on cubical type theory}
\label{sec:cubicaltt-variant}

In the previous section we have seen that the equations to define
$\transp^i\,A$ for a higher inductive type $A$ applied to a
constructor involves $\transp^i\,A$ for the recursive arguments to the
constructor (see the equation for $\squash\,v\,w\,r$ for propositional
truncation in Section~\ref{subsec:truncations}), and involves the
derived operations $\ctransp$ for non-recursive arguments (e.g., in
the equation for $\merid\,a\,r$ in Section~\ref{subsec:suspension}).
In general, $\transp$ and $\ctransp$ which are available for $A$
do not coincide definitionally, making it impossible to treat
the recursive and non-recursive arguments to a constructor uniformly.

This mismatch suggests a variant of cubical type theory where the
operations $\transp$ and $\hcomp$ are taken as primitives and
$\comp$ is instead a derived operation as we did here for higher
inductive types.  We can then explain $\transp$ and $\hcomp$ by cases
on the shape of the type.  In this variation of cubical type theory
the algorithm for $\transp$ in a higher
inductive type applied to a constructor can be uniformly described as follows.

Given a higher inductive type $\gD (\zs : \Ps)$ specified as in
Section~\ref{sec:common-pattern} and a constructor $\gc$ specified by:
\begin{equation}
  \label{eq:generic-constr}
  \gc : (\xs : \As (\zs)) \, %
  \tabs\is \, %
  \gD (\zs) [\phi (\is) \mapsto e (\zs,\xs,\is)]
\end{equation}
Further, assume parameters $\Gamma, i : \II \der \us : \Ps$ of the
higher inductive type $\gD (\zs : \Ps)$ such that $\Gamma, i : \II,
\psi \der \us = \us \subst i 0 : \Ps$ for $\Gamma \der \psi : \FF$.
We now explain the judgmental equalities of
\[
  w_1 := \transp^i\,\gD (\us)\,\psi\,(\gc\,\vs\,\rs)
\]
for $\Gamma \der \vs : \As (\us \subst i 0)$ and $\Gamma \der \rs :
\II$.  This $\gc\,\vs\,\rs$ restricts to $\phi (\rs) \mapsto e (\us
\subst i 0, \vs, \rs)$.  We want to define $\Gamma \der w_1 : \gD (\us
\subst i 1) [\psi \mapsto \gc\,\vs\,\rs]$ such that $w_1$ restricts to
\begin{equation}
  \label{eq:required-extent-w1}
  \phi (\rs) \mapsto
  \transp^i\,\gD (\us)\,\psi\,e (\us \subst i 0, \vs, \rs).
\end{equation}

We get a line in $\xs : \As (\us)$ in the context $\Gamma, i : \II$
\[
  \begin{tikzpicture}[xscale=6]
    \node (VS) at (0,0) {$\vs$}; %
    \node (VStrans) at (1,0) {$\transp^i\,(\xs:\As
      (\us))\,\psi\,\vs$}; %
    \draw[->] (VS) to %
    node[above] {$\thetas := \Transp^i\,(\xs:\As
      (\us))\,\psi\,\vs$} (VStrans);
  \end{tikzpicture}
\]
along $i$, where $\Transp^i\,(\xs : \As)\,\psi\,\vs$ is the extension
of $\Transp$ to telescopes, mapping the empty telescope to itself, and
\[
  \Transp^i\,(x:A,\xs : \As(x))\,\psi\,(v,\vs) = \tilde
  v,\Transp^i\,(\xs : \As(\tilde v))\,\psi\,\vs
\]
with $\tilde v = \Transp^i\,A\,\psi\,v$.  The extension of $\transp$
to telescopes is the $\subst i 1$ face of the corresponding $\Transp$.

We start with $\Gamma \der w_1' : \gD (\us \subst i 1)$ given by
\[
  w_1' := \gc\,(\transp^i\,(\xs : \As)\,\psi\,\vs)\,\rs
\]
which restricts to $\phi (\rs) \mapsto e (\us \subst i 1,
\transp^i\,(\xs : \As)\,\psi\,\vs,\rs)$ and which we have to correct
to match~\eqref{eq:required-extent-w1}.  To make this correction,
consider the line $\Gamma, \phi (\rs), i : \II \der \alpha(i) : \gD
(\us \subst i 1)$ given by
\[
  \alpha (i) := \squeeze^i\,\gD(\us)\,\psi\,e(\us, \thetas,\rs)
\]
connecting the element in~\eqref{eq:required-extent-w1} to $e (\us
\subst i 1, \transp^i\,(\xs : \As)\,\psi\,\vs,\rs)$.  Note that
$\alpha (i)$ coincides with $e (\us \subst i 0,\vs,\rs)$ (and hence
with $\gc\,\vs\,\rs$) on $\psi$.

We now add the judgmental equality
\[
  w_1 = \hcomp^i_{\gD (\us \subst i 1)}\,[\phi (\rs) \mapsto \alpha (1
  - i), \psi \mapsto \gc\,\vs\,rs]\,w_1'.
\]

Note that in the definition $\alpha$ we recursively call $\transp$ for
$\gD$ on $e$.  To ensure that this call is well-founded it is crucial
to have restrictions on how $e$ may look like.

Also note that this algorithm might not be optimal: For a higher
inductive type without any parameters (e.g., $\Sp^1$) we could have
simply defined $\transp$ to be the identity as we did in the previous
section. For a type where the endpoints of constructors are suitably
simple, like suspensions and propositional truncation, but not
pushouts, we could have directly taken $w_1'$ above. This has the
consequence that the result might have some unnecessary $\hcomp$'s and
would equal, up to a $\Path$, to a simpler term without these
$\hcomp$'s.

Our general pattern of constructors~\eqref{eq:generic-constr} suggests
to formulate a schema.  Such a schema would have to ensure that
$\gD(\zs)$ only appears strictly positive in $\As$ and would have to
restrict what possible endpoints $e$ are allowed.  We leave the
detailed formulation of the semantics of such a schema as future work.

\section{Conclusions and related work}
\label{sec:conclusions}

In this paper we constructed the semantics of some important higher
inductive types in cubical sets.  A crucial ingredient was the
decomposition of the composition structure into a homogeneous
composition structure and a transport structure.  Using this
decomposition we define higher inductive type formers such that they
preserve the universe level and are strictly stable under
substitution.

We also extended cubical type theory with some higher inductive types.
While~\cite{Huber16} only proves canonicity for cubical type theory
extended with the circle and propositional truncation, it should be
straightforward to extend this result to the higher inductive types
presented in this paper using the obvious operational semantics
obtained by orienting the judgmental equalities given here.  It also
remains to prove normalization and decidability of type-checking for
cubical type theory and in particular also for our extension with
higher inductive types.

As mentioned in Section~\ref{sec:cubicaltt-variant}, it is more
natural for a general treatment of higher inductive types to formulate
a variation of cubical type theory based homogeneous compositions and
transport as primitive instead of heterogeneous compositions.  It
seems that our description of transport for higher inductive types
also works for a more general schema, but its details and semantics
still have to be worked out.

Using the experimental implementation of the system presented in this
paper we have formalized the ``Brunerie number''\footnote{The complete
  self-contained formalization can be found at:
  \url{https://github.com/mortberg/cubicaltt/blob/hcomptrans/examples/brunerie.ctt}},
i.e., $n$ such that $\pi_4(\Sp^3) \simeq \mathbb{Z}/n\mathbb{Z}$. The
formalization closely follows~\cite[Appendix B]{Brunerie16} and the
definition involves multiple higher inductive types (the spheres,
truncations, and join construction) together with many uses of the
univalence axiom.  By the classical definition of this homotopy group
we know that the expected value for $n$ is $\pm 2$ and this also is
proved to be the case in~\cite{Brunerie16}.  But as we have a
constructive justification for all of the notions involved in the
definition we can in principle directly obtain this numeral by
computation.  However, this computation so far has been unfeasible.

Further future work is to relate our semantics to other models of
homotopy type theory.  In particular, clarify the connection of the
model structure on cubical sets~\cite{Sattler17} and the usual model
structure on simplicial sets. It is also of interest to investigate to
what extent the techniques developed in this paper can be adapted to
the simplicial set model.\footnote{See the following discussion for
  more details:
  \url{https://groups.google.com/d/msg/homotopytypetheory/bNHRnGiF5R4/3RYz1YFmBQAJ}}

\paragraph{Related work}

The
papers~\cite{AngiuliEtAl17,CHTTPOPL,AngiuliHouHarper17,CavalloHarper18}
present cubical type theories inspired by an alternative cubical set
category with different fibrancy structure, but with the same
decomposition of the composition operation in a homogeneous
composition and a transport operation. This decomposition was
introduced in an early version of~\cite{CohenCoquandHuberMortberg15}
precisely to solve the problem of the interpretation of higher
inductive types with parameters. The suspensions are covered
in~\cite{AngiuliEtAl17}, and~\cite{CavalloHarper18} defines a schema
for higher inductive types formulated in this setting. The
papers~\cite{CHTTPOPL,AngiuliHouHarper17,CavalloHarper18} describe
computational type theories in the style of Nuprl with a semantics
where types are interpreted as partial equivalence relations which
gives canonicity for booleans. The schema presented
in~\cite{CavalloHarper18} covers all of the examples
of higher inductive types considered in this paper.

The paper \cite{LumsdaineShulman} presents a semantics of higher
inductive types in a general framework of ``sufficiently nice''
Quillen model categories. However as it is now, it models a type
theory which does not contain any universes
(see~\cite[pp.~5--6]{LumsdaineShulman} for a discussion of this
point).

A schema with point, path, and square constructors expressed in the
style of~\cite{HottBook13} is presented in~\cite{Dybjer}. This paper
also contains a semantics for these higher inductive types in the
groupoid model.

\bibliographystyle{plainurl}
\bibliography{thesis}

\begin{thebibliography}{10}

\bibitem{Aczel99}
Peter Aczel.
\newblock On relating type theories and set theories.
\newblock In {\em Types for Proofs and Programs}, volume 1657 of {\em LNCS},
  pages 1--18. Springer, 1999.

\bibitem{AngiuliEtAl17}
Carlo Angiuli, Guillaume Brunerie, Thierry Coquand, Kuen-Bang~Hou (Favonia),
  Robert Harper, and Daniel~R. Licata.
\newblock Cartesian cubical type theory.
\newblock Draft available at
  \url{https://www.cs.cmu.edu/~rwh/papers/uniform/uniform.pdf}, 2017.

\bibitem{AngiuliHouHarper17}
Carlo Angiuli, Kuen-Bang~Hou (Favonia), and Robert Harper.
\newblock {Computational Higher Type Theory III: Univalent Universes and Exact
  Equality}.
\newblock Preprint arXiv:1712.01800v1, 2017.

\bibitem{CHTTPOPL}
Carlo Angiuli, Robert Harper, and Todd Wilson.
\newblock {Computational Higher-dimensional Type Theory}.
\newblock In {\em POPL '17: Proceedings of the 44th ACM SIGPLAN Symposium on
  Principles of Programming Languages}, pages 680--693. ACM, 2017.

\bibitem{AwodeyWarren09}
Steve Awodey and Michael~A. Warren.
\newblock Homotopy theoretic models of identity types.
\newblock {\em Math. Proc. Cambridge Philos. Soc.}, 146(1):45--55, 2009.

\bibitem{Brunerie16}
Guillaume Brunerie.
\newblock {\em On the homotopy groups of spheres in homotopy type theory}.
\newblock PhD thesis, Universit\'e de Nice, 2016.

\bibitem{CavalloHarper18}
Evan {Cavallo} and Robert {Harper}.
\newblock {Computational Higher Type Theory IV: Inductive Types}.
\newblock Preprint arXiv:1801.01568v1, 2018.

\bibitem{CohenCoquandHuberMortberg15}
Cyril Cohen, Thierry Coquand, Simon Huber, and Anders M{\"o}rtberg.
\newblock {Cubical Type Theory: A Constructive Interpretation of the Univalence
  Axiom}.
\newblock In {\em Types for Proofs and Programs (TYPES 2015)}, volume~69 of
  {\em LIPIcs}, pages 5:1--5:34, 2018.

\bibitem{Doorn16}
Floris~van Doorn.
\newblock {Constructing the Propositional Truncation Using Non-recursive HITs}.
\newblock In {\em CPP '16: Proceedings of the 5th ACM SIGPLAN Conference on
  Certified Programs and Proofs}, pages 122--129. ACM, 2016.

\bibitem{Dybjer}
Peter Dybjer and Hugo Moeneclaey.
\newblock {Finitary Higher Inductive Types in the Groupoid Model}.
\newblock {\em Electronic Notes in Theoretical Computer Science}, 336:119--134,
  April 2018.

\bibitem{Eilenberg1939}
Samuel Eilenberg.
\newblock On the relation between the fundamental group on a space and the
  higher homotopy groups.
\newblock {\em Fundamenta Mathematicae}, 32(1):167--175, 1939.

\bibitem{Garner09}
Richard Garner.
\newblock Understanding the small object argument.
\newblock {\em Applied Categorical Structures}, 17(3):247--285, 2009.

\bibitem{BlakersMassey}
Kuen-Bang {Hou (Favonia)}, Eric Finster, Daniel~R. Licata, and Peter~LeFanu
  Lumsdaine.
\newblock {A Mechanization of the Blakers-Massey Connectivity Theorem in
  Homotopy Type Theory}.
\newblock In {\em 31st Annual {ACM/IEEE} Symposium on Logic in Computer
  Science, {LICS} '16}, pages 565--574, July 2016.

\bibitem{Huber16}
Simon Huber.
\newblock Canonicity for cubical type theory.
\newblock Preprint arXiv:1607.04156, July 2016.

\bibitem{KapulkinLumsdaine12}
Chris Kapulkin and Peter~LeFanu Lumsdaine.
\newblock The simplicial model of univalent foundations (after {V}oevodsky).
\newblock Preprint arXiv:1211.2851v4, November 2012.

\bibitem{Kraus16}
Nicolai Kraus.
\newblock Constructions with non-recursive higher inductive types.
\newblock In {\em LICS'16: Proceedings of the 31st Annual ACM/IEEE Symposium on
  Logic in Computer Science}, pages 595--604. ACM, 2016.

\bibitem{LicataBrunerie15}
Daniel~R. Licata and Guillaume Brunerie.
\newblock A cubical approach to synthetic homotopy theory.
\newblock In {\em 30th Annual {ACM/IEEE} Symposium on Logic in Computer
  Science, LICS'15}, pages 92--103, July 2015.

\bibitem{LicataEtAl18}
Daniel~R. Licata, Ian Orton, Andrew~M. Pitts, and Bas Spitters.
\newblock Internal universes in models of homotopy type theory.
\newblock Preprint arXiv:1801.07664, 2018.

\bibitem{LumsdaineShulman}
Peter~LeFanu Lumsdaine and Michael Shulman.
\newblock Semantics of higher inductive types.
\newblock Preprint arXiv:1705.07088, May 2017.

\bibitem{OrtonPitts16}
Ian Orton and Andrew~M. Pitts.
\newblock Axioms for modelling cubical type theory in a topos.
\newblock In {\em 25th EACSL Annual Conference on Computer Science Logic (CSL
  2016)}, volume~62 of {\em LIPIcs}, pages 24:1--24:19, 2016.

\bibitem{RiehlShulman17}
Emily Riehl and Michael Shulman.
\newblock A type theory for synthetic $\infty$-categories.
\newblock {\em Higher Structures}, 1(1):147--224, 2017.

\bibitem{Rijke17}
Egbert {Rijke}.
\newblock The join construction.
\newblock Preprint arXiv:1701.07538v1, 2017.

\bibitem{Sattler17}
Christian {Sattler}.
\newblock {The Equivalence Extension Property and Model Structures}.
\newblock Preprint arXiv:1704.06911v1, 2017.

\bibitem{Sojakova16Torus}
Kristina Sojakova.
\newblock {The Equivalence of the Torus and the Product of Two Circles in
  Homotopy Type Theory}.
\newblock {\em ACM Transactions on Computational Logic}, 17(4):29:1--29:19,
  November 2016.

\bibitem{Swan14}
Andrew Swan.
\newblock An algebraic weak factorisation system on 01-substitution sets: A
  constructive proof.
\newblock Preprint arXiv:1409.1829, September 2014.

\bibitem{HottBook13}
The {Univalent Foundations Program}.
\newblock {\em Homotopy Type Theory: Univalent Foundations of Mathematics}.
\newblock Institute for Advanced Study, 2013.

\bibitem{Voevodsky10}
Vladimir Voevodsky.
\newblock The equivalence axiom and univalent models of type theory. ({T}alk at
  {CMU} on {F}ebruary 4, 2010).
\newblock Preprint arXiv:1402.5556, 2014.

\bibitem{Voevodsky15}
Vladimir Voevodsky.
\newblock An experimental library of formalized mathematics based on the
  univalent foundations.
\newblock {\em Mathematical Structures in Computer Science}, 25:1278--1294,
  2015.

\end{thebibliography}

\newpage
\appendix

\section{Appendix: construction of initial algebras}
\label{sec:appendix1}
In this appendix we sketch how to construct the semantic versions of
the examples of higher inductive types $T$ that we consider. With
suitable definitions of $T$-algebra structures these proofs can be
seen as constructions of initial $T$-algebras.

\paragraph{Torus}

The semantic version of $\TT$ is very similar to that of $\Sp^1$, so
we only give the semantics of $\TTF$ as it is more interesting. Just
as for the circle we first define an upper approximation of sets
$\TTF^{\text{pre}}(I)$, together with maps $\TTF^{\text{pre}}(I) \to
\TTF^{\text{pre}}(J)$ for $f : J \to I$. An element of
$\TTF^{\text{pre}}(I)$ is of the form:
\begin{itemize}
  \item $\tfb$, or
  \item $\tfp\,r$ or $\tfq\,r$ with $r\neq 0,1$ in $\II(I)$, or
  \item $\tfsurf\,r\,s$ with $r,s\neq 0,1$ in $\II(I)$, or
  \item $\hcomp~[\phi\mapsto u]~u_0$ with $\phi \neq 1$ in $\FF(I)$
    and $u_0$ in $\TTF^{\text{pre}}(I)$ and $u$ a family of elements
    $u_{f,r}$ in $\TTF^{\text{pre}}(J)$ for $f : J \to I$ and $r$ in $\II(I)$ such that
    $\phi f = 1$.
\end{itemize}

We write $\tfp \cdot_r \tfq$ for
\[
  \hcomp~[ (r = 0) \mapsto \tfb \, \pp, (r = 1) \mapsto
  \tfq\,\qq]~(\tfp \, r)
\]
and similarly for $\tfq \cdot_r \tfp$. We define $u f$ in
$\TTF^{\text{pre}}(J)$ for $f : J \to I$ by induction on $u$ just like
for $\Sp^1_{\text{pre}}$, the interesting case is:
\begin{align*}
 (\tfsurf \, r \, s) f =&
  \begin{cases}
    \tfsurf \, (r f) \, (s f) & \text{if } r f \neq 0,1 \text{ and } sf \neq 0,1 \\
    \tfp \cdot_{sf} \tfq & \text{if } rf = 0 \text{ and } sf \neq 0,1 \\
    \tfq \cdot_{sf} \tfp & \text{if } rf = 1 \text{ and } sf \neq 0,1 \\
    \tfb      & \text{otherwise}\\
 \end{cases}  \\
 (\hcomp~[\phi\mapsto u]~u_0)f =&
  \begin{cases}
   u_{f,1} & \text{if } \phi f = 1 \\
   \hcomp~[\phi f\mapsto uf^+]~(u_0 f) & \text{otherwise}
 \end{cases}
\end{align*}
where $uf^+$ is the family $(uf^+)_{g,r} = u_{fg,r}$ for $g : K \to J$.

We then define the subset $\TTF(I) \subseteq \TTF^{\text{pre}}(I)$ by
taking the elements $\tfb$, $\tfp \, r$, $\tfq\,r$, $\tfsurf\,r\,s$
and $\hcomp~[\phi\mapsto u]~u_0$ such that $u_0 \in \TTF(I)$, $u_{f,r} \in
\TTF(J)$ for $f:J\to I$ satisfying $u_0 f = u_{f,0}$ for $f:J\to
I$ and $u_{f,r} g = u_{fg,rg}$ for $f : J \to I$ and $g : K \to J$.  This
defines a cubical set $\TTF$, such that $\TTF(I)$ is a subset of
$\TTF^{\text{pre}}(I)$ for each $I$.

\paragraph{Suspension}
Given presheaf $\Gamma$ and $A$ a dependent presheaf over $\Gamma$
(which is a presheaf on the category of elements of $\Gamma$) we
explain how to build the suspension of $A$, written $\susp{A}$, which
is an initial $\susp{A}$-algebra.  Just like for the parameter-free
higher inductive types we first define a family of sets $X (I,\rho)$,
for $\rho \in \Gamma(I)$ which is an upper approximation of the
suspension. An element of $X(I,\rho)$ is of the form:
\begin{itemize}
  \item $\north$, $\south$, or
  \item $\merid~a~r$ with $a \in A (I,\rho)$ and $r \neq 0,1$ in
    $\II(I)$, or
  \item $\hcomp~[\phi\mapsto u]~u_0$ with $\phi \neq 1$ in $\FF(I)$
    and $u_0$ in $X(I,\rho)$ and $u$ a family of elements $u_{f,r}$ in $X(J,
    \rho f)$ for $f \co J \to I$ such that $\phi f = 1$.
\end{itemize}

In this way an element of $X(I,\rho)$ can be seen as a well-founded
tree. We now define the tentative restriction maps $X(I,\rho) \to
X(J,\rho f)$, $u \mapsto uf$ for $f \co J \to I$ by induction on $u$:
\begin{align*}
  \north f =& ~ \north \\
  \south f =& ~ \south \\
  (\merid\,a\,r) f =&
  \begin{cases}
   \north & \text{if } rf = 0 \\
   \south & \text{if } rf = 1 \\
   \merid\,(af)\,(rf)      & \text{otherwise}
 \end{cases} \\
  (\hcomp~[\phi\mapsto u]~u_0)f =&
  \begin{cases}
   u_{f,1} & \text{if } \phi f = 1 \\
   \hcomp~[\phi f\mapsto uf^+]~(u_0 f) & \text{otherwise}
 \end{cases}
\end{align*}
where $uf^+$ is the family $(uf^+)_{g,r} = u_{fg,rg}$ for $g : K \to J^+$.

We define $(\susp{A}) (I,\rho)$ as the subset of $X(I, \rho)$ of
elements $\north$, $\south$ or $\merid~a~r$ with $a \in A \rho$ and
$\hcomp~[\phi\mapsto u]~u_0$ with $u_0$ in $(\susp{A})\rho$ and
$u_{f,0} = u_0 f$ for $f:J\to I$ and each $u_{f,r}$ in
$(\susp{A})(\rho f)$ for $f:J\to I$ and $u_{f,r}g = u_{fg,rg}$ for $g : K
\to J$ and $f:J\to I$ and $r$ in $\II(J)$.

This defines the initial $\susp{A}$-algebra relative to a context
$\Gamma$.  Since this operation commutes with substitution $\Delta \to
\Gamma$, it is an external description of the operation which takes an
arbitrary type $A$ and produces the free $\susp{A}$-algebra.

\paragraph{Pushouts}
Given $D = A, B, C, u:C\to A, v:C\to B$ a diagram over $\Gamma$ we
explain how to define $\Push$, initial $D$-algebra over $\Gamma$.  We
first define a family of sets $X(I,\rho)$, for $\rho \in \Gamma(I)$
which is an upper approximation of the pushout. An element of
$X(I,\rho)$ is of the form:
\begin{itemize}
  \item $\inl\,a$ for $a \in A (I,\rho)$, or
  \item $\inr\,b$ for $b \in B (I,\rho)$, or
  \item $\push~c~r$ with $c \in C(I,\rho)$ and $r \in \II(I)$ such
    that $r \neq 0,1$, or
  \item $\hcomp~[\phi\mapsto u]~u_0$ with $\phi \neq 1$ in $\FF(I)$
    and $u_0$ in $X(I,\rho)$ and $u$ a family of elements $u_f$ in
    $X(J,\rho f)$ for $f \co J \to I$ such that $\phi  f = 1$.
\end{itemize}

The maps $X(I,\rho) \to X(J,\rho f)$ for $f \co J \to I$ are defined
by induction:
\begin{align*}
  (\inl\,a) f =& ~ \inl\,(a f) \\
  (\inr\,a) f =& ~ \inr\,(b f) \\
  (\push\,c\,r) f =&
  \begin{cases}
   \inl\,(\app(u,c f)) & \text{if } rf = 0 \\
   \inr\,(\app(v,c f)) & \text{if } rf = 1 \\
   \push\,(cf)\,(rf)      & \text{otherwise}
 \end{cases} \\
  (\hcomp~[\phi\mapsto u]~u_0)f =&
  \begin{cases}
   u_{f,1} & \text{if } \phi f = 1 \\
   \hcomp~[\phi f\mapsto uf^+]~(u_0 f) & \text{otherwise}
 \end{cases}
\end{align*}
where $uf^+$ is the family $(uf^+)_{g,r} = u_{fg,r}$ for $g : K \to J$.

We define $(\Push)(I,\rho)$ for $\rho \in \Gamma(I)$ as the subset of
$X(I,\rho)$ with elements
\begin{itemize}
  \item $\inl~a$ with $a \in A(I,\rho)$, or
  \item $\inr~b$ with $b \in B(I,\rho)$, or
  \item $\push~c~r$ with $c \in C(I,\rho)$ and $r \in \II(I)$ such
    that $r \neq 0,1$, or
  \item $\hcomp~[\phi\mapsto u]~u_0$ with $u_0$ in $(\Push)(I,\rho)$
    and $u_{f,0} = u_0 f$ if $f\co J\to I$ and each $u_f$ in
    $(\Push)(J,\rho\sigma f)$ for $f\co J\to I$ and $u_{f,r}g = u_{fg,rg}$
    for $g \co K \to J$ and $f \co J\to I$ and $r$ in $\II(I)$.
\end{itemize}

\end{document}